\newtheorem{lemma}{Lemma}
\newtheorem{definition}{Definition}
\newtheorem{mechanism}[definition]{Definition}
\newtheorem{theorem}{Theorem}
\newtheorem{corollary}{Corollary}
\newcommand{\floor}[1]{\lfloor#1\rfloor}
\newcommand{\ceil}[1]{\lceil#1\rceil}
\newcommand{\R}[1]{\mathrm{#1}}
\newcommand{\B}[1]{\mathbf{#1}}
\newcommand{\figcoordnorm}{\psaxes[labels=none,ticks=x]{->}(14,6)
\psline[linestyle=dotted,dotsep=2pt](1,0)(1,6)
\psline[linestyle=dotted,dotsep=2pt](2,0)(2,6)
\psline[linestyle=dotted,dotsep=2pt](3,0)(3,6)
\psline[linestyle=dotted,dotsep=2pt](4,0)(4,6)
\psline[linestyle=dotted,dotsep=2pt](5,0)(5,6)
\psline[linestyle=dotted,dotsep=2pt](6,0)(6,6)
\psline[linestyle=dotted,dotsep=2pt](7,0)(7,6)
\psline[linestyle=dotted,dotsep=2pt](8,0)(8,6)
\psline[linestyle=dotted,dotsep=2pt](9,0)(9,6)
\psline[linestyle=dotted,dotsep=2pt](10,0)(10,6)
\psline[linestyle=dotted,dotsep=2pt](11,0)(11,6)
\psline[linestyle=dotted,dotsep=2pt](12,0)(12,6)
\psline[linestyle=dotted,dotsep=2pt](13,0)(13,6)
\psline[linestyle=dotted,dotsep=2pt](0,3.2)(3,3.2)\rput(-0.5,3.2){$iv_i$}
\psline(-0.1,3.2)(0.1,3.2)
\rput(0,6.5){revenue}\rput(14,-0.5){index}
}
\newcommand{\figcoord}{\psaxes[labels=none,ticks=x]{->}(14,6)
\psline[linestyle=dotted,dotsep=2pt]{-}(1,-0.5)(1,6)\psline[linestyle=dotted,dotsep=2pt]{-}(2,-0.5)(2,6)\psline[linestyle=dotted,dotsep=2pt]{-}(3,-0.5)(3,6.5)\psline[linestyle=dotted,dotsep=2pt]{-}(5,-0.5)(5,6)\psline[linestyle=dotted,dotsep=2pt]{-}(7,-0.5)(7,6.5)\psline[linestyle=dotted,dotsep=2pt]{-}(10,-0.5)(10,6)
\psline[linestyle=dotted,dotsep=2pt]{-}(0,0)(12,0.93)
\psline[linestyle=dotted,dotsep=2pt]{-}(0,0)(12,1.35)
\psline[linestyle=dotted,dotsep=2pt]{-}(0,0)(12,1.94)
\psline[linestyle=dotted,dotsep=2pt]{-}(0,0)(12,2.79)
\psline[linestyle=dotted,dotsep=2pt]{-}(0,0)(12,4.02)
\psline[linestyle=dotted,dotsep=2pt]{-}(0,0)(12,5.79)
\psline[linestyle=dotted,dotsep=2pt]{-}(0,0)(8.64,6)
\psline[linestyle=dotted,dotsep=2pt]{-}(0,0)(6.00,6)
\rput(0,6.5){revenue}\rput(14,-0.5){index}
}
\newcommand{\figrevs}{\psline[linestyle=none,linewidth=2pt,showpoints=true,dotstyle=*](0,0)(1,2.4)(2,3.4)(3,3.2)(4,2.6)(5,3.25)(6,3.9)(7,4.2)(8,2.8)(9,3.1)(10,1.8)(11,1.4)(12,1)(13,0)\psline[linestyle=solid,linewidth=2pt](0,0)(1,2.4)(2,3.4)(7,4.2)(14,4.2)\rput(13.5,4.5){$\revs$}}
\newcommand{\figerevs}{
\psline[linestyle=dotted,linewidth=2pt](0,0)(3,3)(7,3.38)(14,3.38)\rput(13.5,3.08){$\erevs$}\psline[linestyle=none,linewidth=2pt,showpoints=true,dotstyle=square](0,0)(1,1)(2,2)(3,3)(4,1.93)(5,2.41)(6,2.90)(7,3.38)(8,0)(9,0)(10,0)(11,0)(12,0)(13,0)
\psline[linestyle=none,linewidth=2pt,showpoints=true,dotstyle=x](0,0)(3,3)(3,2.08)(7,3.38)(7,2.34)(7,1.63)(10,1.61)(10,1.13)(10,0.77)(10,0)}
\newcommand{\integers}{{\mathbb Z}}
\DeclareMathOperator{\CCEPE}{CCEPE}
\DeclareMathOperator{\CrossConsens}{CrossConsens}
\DeclareMathOperator{\Consens}{Consens}
\newcommand{\crossconsens}{\CrossConsens}
\newcommand{\consens}{\Consens}
\newcommand{\shrand}{\sigma}
\newcommand{\mech}{{\cal M}}
\newcommand{\stat}{s}
\newcommand{\cout}{\tilde{\stat}}
\newcommand{\sout}{I}
\newcommand{\super}[1]{^{(#1)}}
\newcommand{\rev}{R}
\newcommand{\revs}{{\mathbf \rev}}
\newcommand{\revi}[1][i]{{\rev_{#1}}}
\newcommand{\erev}{\tilde{R}}
\newcommand{\erevs}{{\mathbf \erev}}
\newcommand{\erevi}[1][i]{{\erev_{#1}}}
\newcommand{\val}{v}
\newcommand{\vals}{{\mathbf \val}}
\newcommand{\valsmi}[1][i]{{\mathbf \val}_{-#1}}
\newcommand{\vali}[1][i]{{\val_{#1}}}
\newcommand{\evirt}{\tilde{\virt}}
\newcommand{\evirts}{{\boldsymbol\evirt}}
\newcommand{\virt}{\phi}
\newcommand{\virti}[1][i]{\virt_{#1}}
\newcommand{\price}{p}
\newcommand{\prices}{{\mathbf \price}}
\newcommand{\pricei}[1][i]{{\price_{#1}}}
\newcommand{\alloc}{x}
\newcommand{\allocs}{{\mathbf \alloc}}
\newcommand{\alloci}[1][i]{\alloc_{#1}}
\DeclareMathOperator{\Roperator}{R}
\newcommand{\RC}[1][]{\Roperator\ifthenelse{\not\equal{}{#1}}{^{#1}}{}}
\newcommand{\IR}[1][]{\bar{\Roperator}\ifthenelse{\not\equal{}{#1}}{^{#1}}{}}
\DeclareMathOperator{\PFOoperator}{PFO}
\newcommand{\PFO}[1][]{\PFOoperator\ifthenelse{\not\equal{}{#1}}{^{#1}}{}}
\newcommand{\VV}[1][]{\Phi\ifthenelse{\not\equal{}{#1}}{^{#1}}{}}
\newcommand{\ivv}[1][]{\bar{\VV}\ifthenelse{\not\equal{}{#1}}{^{#1}}{}}
\DeclareMathOperator{\ICoperator}{IC}
\newcommand{\IC}[1]{\ICoperator\ifthenelse{\not\equal{}{{#1}}}{^{{#1}}}{}}
\DeclareMathOperator{\PEoperator}{PE}
\newcommand{\PE}[1]{\PEoperator_{#1}}
\DeclareMathOperator{\EFoperator}{EF}
\newcommand{\EF}[1]{\EFoperator\ifthenelse{\not\equal{}{#1}}{^{{#1}}}{}}
\DeclareMathOperator{\EFO}{EFO}
\DeclareMathOperator{\Vic}{Vic}
\newcommand{\eval}{\tilde{v}}
\newcommand{\evals}{{\mathbf \eval}}
\newcommand{\evali}[1][i]{{\eval_{#1}}}
\newcommand{\ealloc}{\tilde{x}}
\newcommand{\eallocs}{{\mathbf \ealloc}}
\newcommand{\ealloci}[1][i]{\ealloc_{#1}}
\begin{document}

% \include will put each section in a new page, \input to eliminate new page
%\begin{titlepage}
\title{Mechanism Design via\\Consensus Estimates, Cross Checking, and Profit Extraction}
\author{
  Bach Q. Ha\\
  \small{Northwestern University}\\
  \texttt{\small{bach@u.northwestern.edu}}
\and
  Jason D. Hartline\\
  \small{Northwestern University}\\
  \texttt{\small{hartline@eecs.northwestern.edu}}
}
\date{}
%\settimeformat{hhmmsstime}
%\date{\usdate\today\ \currenttime}
\maketitle

\section{Introduction}

%
% beyond random sampling.
%
There is only one technique for {\em prior-free optimal mechanism design} that
generalizes beyond the structurally benevolent setting of digital goods.  This
technique uses random sampling to estimate the distribution of agent values and
then employs the {\em Bayesian optimal mechanism} for this estimated
distribution on the remaining players.  Though quite general, even for digital
goods, this random sampling auction has a complicated analysis and is known to
be suboptimal.
%% \footnote{A lower bound on the random sampling auction's
%%   approximation factor is 4~\cite{GHKW06}, the tightest analysis upper
%%   bounds its approximation factor by 4.68~\cite{AMS09}, and the best
%%   (known) prior-free auction has an approximation factor of 3.25.
%%   This latter auction does use random partitioning of the agents, but
%%   neither discards the agents in any partition nor does it employ the
%%   Bayesian optimal mechanism as a subroutine.}  
To overcome these issues we generalize the consensus technique from \cite{GH03}
to structurally rich environments that include, e.g., single-minded
combinatorial auctions.

%
% why prior-free
%
The classical economic theory of mechanism design is Bayesian: it is assumed
that the preferences of the agents are drawn at random from a known probability
distribution and the designer aims to optimize their objective
in expectation over this randomization.  This leads to mechanisms that are
tailored to the distributional setting.  In contrast prior-free mechanism
design looks at mechanisms that perform well without knowledge or assumptions
on agent preferences.  While these mechanisms do not perform as well as ones
tailored to the distribution, in many setting they provide good approximations,
and are more robust.

%
% digital goods and beyond
%
The simplest environment in which to explore mechanism design is that of
selling a {\em digital good}, i.e., where the seller has no constraint over the
subsets of agents that can be served simultaneously.  Recent contributions to
the literature on prior-free mechanism design have focused on extending results
for digital good environments to ones that are more structurally rich.  From
least-general to most-general, these include {\em multi-unit environments},
where there are a given $k$ number of units available for sale (i.e., any
subset of the agents of size at most $k$ can be served); {\em matching
environments}, where feasible sets correspond to one side of a bipartite
matching; and {\em downward-closed environments}, where the only constraint on
feasible sets is that any subset of a feasible set is feasible.

%
% random sampling auction
%
The only prior-free mechanisms known to give good approximations for general
downward-closed environments are variants of the random sampling auction.  This
auction first gathers distributional information from a random sample of the
agents and then simulates the Bayesian optimal auction for the empirical
distribution on the remaining agents.  The agents in the sample are always
rejected.  Tight analysis of the random sampling auction is difficult,
upper-bounds on its approximation factor are 4.68, 25, 50, and 2560 for digital
good \cite{AMS09}, multi-unit~\cite{DH09}, matching~\cite{HY11}, and downward
closed environments~\cite{HY11}, respectively.  Other mechanism design
techniques give 3.25 and 6.5-approximations for digital-good~\cite{HM05} and
multi-unit environments~\cite{HY11}, respectively, and notably the 3.25
approximation for digital goods surpasses the lower-bound of 4 which is known
for the random sampling auction.  These limitations suggest the need to
consider other techniques for obtaining good approximations for general
downward-closed environments.

%
% consensus based approach 
%
To obtain good approximation mechanisms for general downward-closed
environments, we generalize the digital good auction technique of consensus
estimates from~\cite{GH03}.  The two main ingredients of this approach are a
{\em profit extraction} mechanism and a {\em consensus} function.  Given a
target profit, the profit extraction mechanism should approximate the target
if the target is less than the optimal revenue possible.  If we had a good
estimate of the revenue, we could then obtain a good revenue with the profit
extractor.  The consensus function is used to get an estimate of the revenue
from the reports of the agents in a way that is non-manipulable.  In
particular, for each agent we can calculate the optimal profit from the other
agents, plug this profit into the consensus function, and with high
probability the estimated profit produced will be the same for all agents.  We
can then simulate the profit extraction mechanism for each agent with their
consensus estimate.  If the estimates agree, the result of this simulation is
the agreed-upon profit, otherwise, it is at least zero.

%
% challenges: profit-extraction and inconsistency
%
There are two main challenges to extending this approach for general
downward-closed environments.  The first challenge is in designing a
profit-extraction mechanism for these environments.  Our profit
extraction mechanism will be parameterized by a {\em revenue curve},
the revenue as a function of number of winners (without taking into
account any feasibility constraints).  Given a target revenue curve
that is below the actual revenue curve, our mechanism obtains revenue
comparable to that which would be obtained by the optimal mechanism on
the input that corresponds to the target revenue curve.  The second
challenge is in ensuring infeasible outcomes are not produced in the
case that the mechanism does not have a consensus.  Note that for
digital good environments, there is no feasibility constraint that
could be violated when the estimates do not reach consensus.  The same
is not so for general downward-closed environments.  A parameterized
mechanism (such as a profit extractor) is of course required to always
produced feasible outcomes.  However, if we determine the outcome for
each agent by simulating the parameterized mechanism with different
parameters, the combined outcome may not be feasible.  To address this
potential inconsistency we give a {\em cross checking} approach for
identifying a subset of agents which which consensus is achieved.

%
% results
%
Our mechanism is a $30.4$-approximation in general downward-closed
environments.

\paragraph{Related Work.}
This paper derives its framework for prior-free mechanism design and
analysis from Hartline and Yan~\cite{HY11}.  It extends the
profit-extraction and consensus techniques from Goldberg and
Hartline~\cite{GH03,GH05}.  Other than these, the closest related work
to ours is that of Dhangwatnotai, Roughgarden, and Yan~\cite{DRY10}.
They consider abstract service provision in downward-closed
environments with the added assumption that the values of the agents
are distributed according to a unknown distribution that satisfies a
standard {\em monotone hazard rate} assumption.  Under this
assumption, they give an (essentially) $4$-approximation mechanism.
In contrast, our mechanism gives a worse bound, but does so without
the distributional assumption.

\paragraph{Organization.}
In Section~\ref{sec:prelim} we will formally describe our auction
environment, design and analysis framework, and review the consensus
technique.  In Section~\ref{sec:cross-check} we describe our
cross-checking approach as it applies to obtaining a consistent
consensus estimate.  In Section~\ref{sec:PE} we describe a mechanism
for extracting the profit suggested by a given target revenue curve.
In Section~\ref{sec:consensus} we describe an approach for obtaining a
consensus estimate on revenue curves.  Finally, in
Section~\ref{sec:mechanisms} we combine the three parts to give a good
mechanism and analyze its performance.

%%
%% background
%%

%
% bayesian MD
%

%
% prior-free framework
%

%
% benchmarks
%

% introduction on why we use EFO and lead to the results of EFO
%% We focus on the \emph{optimal envy-free revenue} $\R{EFO}(\B{v})$,
%% which will be formally defined in \ref{subsubsec:envyfree}, as the
%% benchmark for our mechanisms. This is justified by two main
%% reasons. First, since $\R{EFO}(\B{v})$ has been characterized and
%% studied in detail, this enables us to focus on the algorithm aspects
%% of the problem. Second, for a large class of problems, any mechanism
%% that approximates $\R{EFO}(\B{v})$, approximates the optimal mechanism
%% that knows the distribution from which the valuations are drawn
%% i.i.d.. Furthermore, $\R{EFO}(\B{v})$-based benchmark results in
%% stronger approximation guarantees compared to ones proposed in the
%% literature that are based on Vickrey-Clarke-Groves with the best
%% reserve price, e.g. \cite{HR09}. Due to a technical reason as
%% explained in \cite{HY11}, we will use the benchmark of
%% $\R{EFO}^{(2)}(\B{v})$ which is defined to be
%% $\R{EFO}((v_2,v_2,v_3,\ldots,v_n))$. For the rest of this section, we
%% will briefly go through the important results of $\R{EFO}(\B{v})$, all
%% of which are from \cite{HY11}, that will be used later.

%
% general set systems.
% 

%%
%% our techniques
%%

%
% weighted priority mechanism
%

%
% consensus
%

%
% cross-checking
%

%
% results
%

%%
%% related work
%%

%%
%% overview
%% 

\section{Preliminaries}
\label{sec:prelim}

Here we describe the abstract setting in which we consider mechanism
design and the structural tools that we will be using to design and
analyze mechanisms.

\paragraph{Model.}

% definitions
Let $N$ be a set of $n\geq 2$ bidders. Each bidder $i\in{N}$ has a
private \emph{valuation} $v_i$ for receiving some abstract service.  A
bidder $i$, upon reporting his valuation, will be served with a
probability $x_i$ and charged a payment of $p_i$.  We denote the {\em
  valuation profile}, {\em allocation vector}, and {\em payment
  vector} by $\B{v}=(v_1,\ldots,v_n)$,
$\B{x}=\{x_1,\ldots,x_n\}\in[0,1]^n$, and $\B{p}=(p_1,\ldots,p_n)$
respectively.  Without loss of generality we index the agents in
decreasing order of value, i.e., $\vali \geq \vali[i+1]$.

% downward closure
There is a {\em feasibility} constraint which describes the subsets of the agents that can be served simultaneously.  We assume that this
feasibility constraint is downward closed, i.e., any subset of a
feasible set is feasible.  As described in the introduction, many
common environments for mechanism design are downward-closed.  

% randomization
We allow the feasibility constraint to be probabilistic, i.e., given
by a convex combination of downward closed set systems.  We assume the
following semantics for a randomized feasibility constraint: (i)
agents bid, (ii) the randomization over the feasibility constraint is
realized, and then (iii) the mechanism runs on the reported valuations
and the realized set system.  For the purpose of calculating revenue
the allocation $\B{x}$ and payments $\B{p}$ are taken in expectation
over the randomization in the mechanism and the set system.  That
said, the mechanisms we consider will be incentive compatible even
when the order of steps (i) and (ii) are reversed.

% symmetry
A fundamental and potentially restrictive assumption that we will make
is one of symmetry.  Given an asymmetric set system we can always make
it symmetric by randomly permuting the identities of the agents.  This
assumption is akin to standard assumptions for the secretary problem
and in settings where one might consider the agents to be a priori
identical (e.g., if there values were drawn from an
i.i.d.~distribution) it is without loss.  The resulting feasibility
constraint we refer to as a {\em downward-closed permutation
  environment}.

% algorithms
Our mechanisms will be based on simple algorithms.  Given weights
$\B{w} = (w_1,\ldots,w_n)$ indexed in non-increasing order, we assume
we have an algorithm for selecting a feasible set to optimize the sum
of the selected weights (for the realized set system) with ties broken
randomly.  As suggested above, $x_i$ will denote the probability (over
randomization in the set system and random tie-breaking) that
the $i$th largest weight is selected.  Clearly $\B{x}$ maximizes $\sum_i w_i
x_i$ subject to feasibility and so we will refer to $\B{x}$ as the
{\em maximizer} for weights $\B{w}$.

We assume the standard {\em risk-neutral} {\em quasi-linear} utility
model, i.e., an agent $i$ wishes to maximize their expected utility
which is given by $u_i=v_ix_i-p_i$. We will focus solely on incentive
compatible (IC) mechanisms; which means for any agent, reporting their
true valuation would be a dominant strategy; and we assume that agents
follow this dominant strategy.  We view a mechanism as a function from
reports to allocation and payments and denote these functions by
$\allocs(\cdot)$ and $\prices(\cdot)$.  A mechanism is incentive
compatible if and only if \cite{M81}:
\begin{enumerate}
\item $\alloci(\vals)$ is monotone non-decreasing in $\vali$,
  and
\item $\pricei(\vals)$ satisfies the payment identity:
\begin{align}
\label{eq:icpi}
\pricei(\vals) = \vali \alloci(\vals) -
  \int_0^{\vali}\alloci(z,\valsmi)\,dz,
\end{align}
\end{enumerate}
where $(z,\valsmi)$ is the valuation profile with $\vali$ replaced with $z$. 
The second equation is referred to as the {\em payment identity}.
When we give a mechanism we will describe only the {\em allocation
  rule} and infer the {\em payment rule} from this
identity.\footnote{Of course such a payment can be easily calculated,
  e.g., with techniques from Archer et al.~\cite{APTT03}.}  We will
denote the expected payment for agent $i$ with allocation rule
$\allocs(\cdot)$ as $\IC{\allocs}_i(\vals)$ and the total revenue as
$\IC{\allocs}(\vals)$.

\paragraph{Revenue Analysis.}
We adopt the framework from Hartline and Yan~\cite{HY11} wherein the
revenue of the designed mechanism is compared to the {\em envy-free
  revenue} benchmark.  We will denote the maximum envy-free revenue by
$\EFO(\vals)$.  For technical reasons we define our benchmark to be
$\EFO \super 2 (\vals) =
\EFO(\vali[2],\vali[2],\vali[3],\ldots,\vali[n])$. The goal of such a
design and analysis framework is then to give a mechanism that obtains
a revenue that is a good approximation to the benchmark $\EFO \super 2
(\vals)$ in worst case over valuation profiles $\vals$.

Envy-free revenue is defined only for allocation vectors $\allocs$
that are monotone, i.e., $\alloci \geq \alloci[i+1]$.  The envy-free
payment for agent $i$ with monotone allocation $\allocs$ on $\vals$ is
given by~\cite{HY11}:
\begin{align}
\label{eq:efpi}
\EF{\allocs}_i(\vals) 
   &= \sum\nolimits_{j \geq i}^n \vali[j] \cdot (\alloci[j] - \alloci[j+1]) 
%   &= \vali \alloci - \sum\nolimits_{j > i}^n (\vali[j+1] - \vali[j]) \cdot \alloci[j].
\end{align}
The envy-free revenue, denoted $\EF{\allocs}(\vals)$, is the sum of
the envy-free payments.

The envy-free revenue can be understood structurally in terms of the
{\em revenue curve} and {\em virtual values}.  The revenue curve
$\revs$ for $\vals$ describes the optimal revenue as a function of the
number of agents served (when feasibility constraints are ignored).
The $i$th coordinate of the revenue curve, $\revi$, can be calculated
by evaluating at $i$ the smallest concave non-decreasing function that
contains the point set $\{(i,i\vali)\,:i \in \{1,\ldots,n\}\}$.  The
virtual value at $i$ is the left-slope of this function, i.e., $\virti
= \revi - \revi[i-1]$.  See Figure~\ref{fig:prelim}.

%% i.e., $\RC[\vals](i) = i \vali$.  For notational convenience we define
%% $\RC[\vals](0) = \RC[\vals](n+1) = 0$.  Virtual values are the the
%% left-slope of the revenue curve, i.e., $\VV[\vals](i) = \RC[\vals](i) -
%% \RC[\vals](i-1)$.  The envy-free revenue for a monotone allocation
%% $\allocs$ on $\vals$ can be reformulated as:
%% \begin{align}
%% \EF{\allocs}(\vals) 
%%   = \sum\nolimits_{i=1}^n \VV[\vals](i) \cdot \alloci
%%   = \sum\nolimits_{i=1}^n \RC[\vals](i) \cdot (\alloci - \alloci[i+1]).
%% \end{align}
%% To maximize envy-free revenue, we define the {\em ironed revenue
%%   curve} $\IR[\vals]$ to be the minimum non-decreasing concave
%% function that upper-bounds $\RC[\vals]$, and the {\em ironed virtual values} $\ivv[\vals]$ as its
%%slope, i.e., $\ivv[\vals](i) = \IR[\vals](i) - \IR[\vals](i-1)$.  

\begin{lemma}[Hartline and Yan \cite{HY11}]
\label{lemma:EF}
The envy-free revenue of monotone allocation $\allocs$ satisfies
\begin{align}
\EF{\allocs}(\vals)
&=\sum\nolimits_{i=1}^n\virti\cdot \alloci
=\sum\nolimits_{i=1}^n\revi\cdot(\alloci-\alloci[i+1])\label{eqn:EF}
\end{align}
as long as $\alloci=\alloci[i+1]$ whenever $\virti = \virti[i+1]$.
\end{lemma}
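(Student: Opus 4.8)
The plan is to establish the two claimed identities separately and then reconcile them via the defining property of the revenue curve together with the monotonicity hypothesis on $\allocs$. First I would expand the total envy-free revenue directly from the definition~\eqref{eq:efpi}: writing $\EF{\allocs}(\vals)=\sum_{i=1}^n\EF{\allocs}_i(\vals)=\sum_{i=1}^n\sum_{j=i}^n\vali[j]\,(\alloci[j]-\alloci[j+1])$ (with the convention $\alloci[n+1]=0$) and swapping the order of summation, the term $\vali[j]\,(\alloci[j]-\alloci[j+1])$ is counted once for each index $i\le j$, so $\EF{\allocs}(\vals)=\sum_{j=1}^n j\,\vali[j]\,(\alloci[j]-\alloci[j+1])$. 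For the rightmost expression in~\eqref{eqn:EF}, I would use $\revi[j]=\sum_{k=1}^j\virt_k$ (immediate from $\virti=\revi-\revi[i-1]$ with $\revi[0]=0$) and the telescoping identity $\alloci=\sum_{j=i}^n(\alloci[j]-\alloci[j+1])$ to compute
\begin{align*}
\sum\nolimits_{i=1}^n\virti\,\alloci
 &=\sum\nolimits_{i=1}^n\virti\sum\nolimits_{j=i}^n(\alloci[j]-\alloci[j+1])\\
 &=\sum\nolimits_{j=1}^n(\alloci[j]-\alloci[j+1])\sum\nolimits_{i=1}^j\virti\\
 &=\sum\nolimits_{j=1}^n\revi[j]\,(\alloci[j]-\alloci[j+1]),
\end{align*}
which uses no hypothesis at all; this already gives the second equality in~\eqref{eqn:EF}.

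It then remains to show $\sum_{j=1}^n(\revi[j]-j\,\vali[j])\,(\alloci[j]-\alloci[j+1])=0$, and here both the monotonicity of $\allocs$ (so that $\alloci[j]-\alloci[j+1]\ge 0$) and the hypothesis enter. By definition $\revs$ is the pointwise smallest concave non-decreasing function lying weakly above the point set $\{(k,k\vali[k])\}$, so $\revi[j]\ge j\,\vali[j]$; hence the $j$th summand can be nonzero only when this inequality is strict, i.e.\ when $(j,j\vali[j])$ lies strictly below the curve. In that case I would argue by minimality: perturbing the value of $\revs$ at coordinate $j$ downward by a small $\epsilon>0$ preserves concavity, monotonicity, and the ``weakly above'' constraints (each of which fails only for $\epsilon$ bounded away from $0$), contradicting pointwise minimality unless $\revs$ is already linear across coordinate $j$, i.e.\ $\virti[j]=\virti[j+1]$; at the right endpoint we read this with the natural convention $\virti[n+1]=0$, the least concave non-decreasing value being flat. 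The hypothesis then yields $\alloci[j]=\alloci[j+1]$, so that summand vanishes. Summing over $j$ and combining with the two displays gives $\EF{\allocs}(\vals)=\sum_i\virti\,\alloci=\sum_i\revi\,(\alloci-\alloci[i+1])$, as required.

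The two summation rearrangements are routine; the crux — and the only place where care is needed — is the last step: pinning down exactly when $\revi[j]>j\,\vali[j]$ can occur and showing that it forces equal virtual values across coordinate $j$ so that the monotonicity hypothesis applies, including the boundary index $j=n$, which is the reason the convention $\virti[n+1]=0$ is invoked.
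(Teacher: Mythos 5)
The paper itself gives no proof of Lemma~\ref{lemma:EF}; it is cited verbatim from Hartline and Yan~\cite{HY11}. So there is no internal argument to compare against — only the correctness of your self-contained derivation.

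Your derivation is sound. The two summation swaps are exactly right: the definition~\eqref{eq:efpi} collapses to $\sum_j j\,\vali[j]\,(\alloci[j]-\alloci[j+1])$, and Abel summation with $\revi[j]=\sum_{k\le j}\virt_k$ and $\alloci=\sum_{j\ge i}(\alloci[j]-\alloci[j+1])$ gives $\sum_i\virti\,\alloci=\sum_j\revi[j]\,(\alloci[j]-\alloci[j+1])$ with no hypothesis. The only content is then that $\sum_j(\revi[j]-j\,\vali[j])\,(\alloci[j]-\alloci[j+1])=0$, and your perturbation argument correctly pins down why: if $\revi[j]>j\vali[j]$ and $\virti[j]>\virti[j+1]$, one can lower $\revi[j]$ by a small amount while preserving concavity (the binding inequality $2\revi[j]\ge\revi[j-1]+\revi[j+1]$ has slack), monotonicity ($\revi[j]>\revi[j-1]$ since $\virti[j]>\virti[j+1]\ge0$), and the point-set constraint; concavity at $j-1$ and $j+1$ only relaxes. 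This contradicts minimality, so $\virti[j]=\virti[j+1]$ and the hypothesis kills the term.

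The one place that deserves an explicit flag — and you did flag it — is the index $j=n$. As stated, the lemma's hypothesis ``$\alloci=\alloci[i+1]$ whenever $\virti=\virti[i+1]$'' does not literally say anything at $i=n$, since $\virti[n+1]$ and $\alloci[n+1]$ are undefined. Your convention $\virti[n+1]=0$, $\alloci[n+1]=0$ is not just natural but necessary: without it one can take, say, $\vals=(2,1,0)$ and $\allocs=(1,1,1)$, where the hypothesis holds vacuously for $i\le n-1$ but $\EF{\allocs}(\vals)=0$ while $\sum_i\virti\alloci=2$. Making this boundary convention explicit is a genuine improvement over the paper's (and apparently Hartline–Yan's) phrasing, and is consistent with how the paper actually uses the lemma (applied to virtual-surplus maximizers that do not serve zero-virtual-value agents).
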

The optimal envy-free revenue, $\EFO(\vals)$, can be found from
Lemma~\ref{lemma:EF}, in particular, by optimizing {\em virtual
  surplus}, i.e., $\sum_i \virti \alloci$, with random tie-breaking.
Random tie-breaking results in an allocation $\allocs$ that satisfies
$\alloci=\alloci[i+1]$ whenever $\virti = \virti[i+1]$.

Notice that for the same allocation rule $\allocs$, the envy-free
payments \eqref{eq:efpi} and incentive compatible payments
\eqref{eq:icpi} are distinct, i.e., $\IC{\allocs}_i(\vals) \neq
\EF{\allocs}_i(\vals)$.

% visual aids
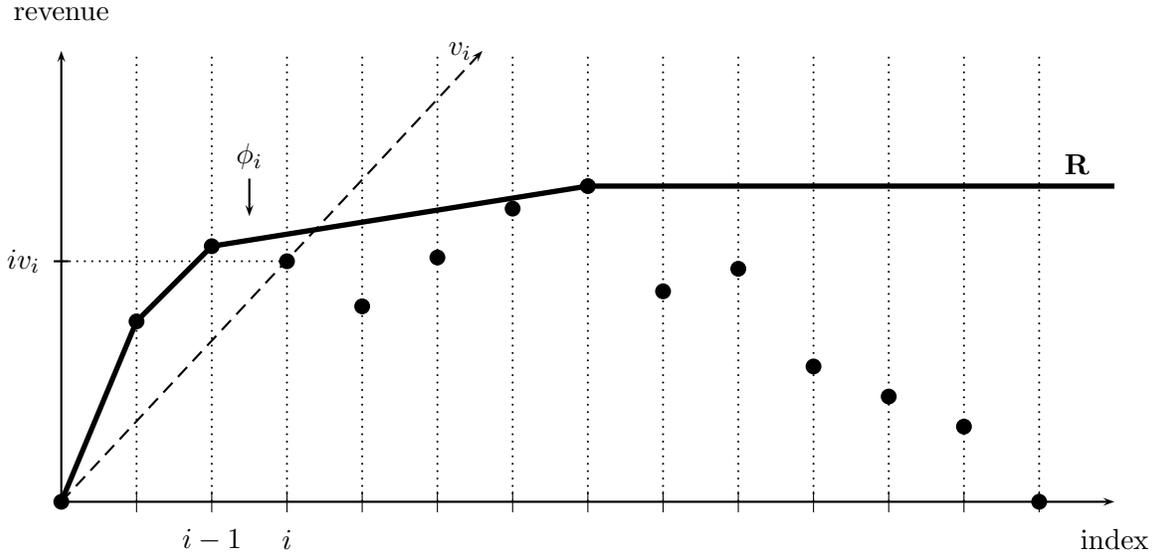
\begin{figure}[ht]
\begin{center}
\begin{pspicture}(0,-0.5)(14,6)
\figcoordnorm
\figrevs
\rput(2,-0.5){$i-1$}\rput(3,-0.5){$i$}
\rput(2.5,4.6){$\virti$}\psline{->}(2.5,4.3)(2.5,3.8)
%\rput(1.8,2.4){$\Vir(i)$}\psline{->}(2.0,2.7)(2.5,3.2)
\psline[linestyle=dashed]{->}(0,0)(5.6,6)\rput(5.3,6){$v_i$}
\end{pspicture}
\caption{Agents are indexed in decreasing order by value.  The point
  set $\{(i,i\vali) \,:\, i \in \{1,\ldots,n\}\}$ is depicted.  The
  revenue curve is the smallest non-decreasing concave function that
  upper-bounds this point set.  The value of agent $i$ can be
  represented by a line from the origin with slope $\vali$; the
  virtual value of agent $i$ is the left-slope of the revenue curve at
  $i$.  }\label{fig:prelim}
\end{center}
\end{figure}

\paragraph{Consensus estimates.} %----------------------------------------------
%\label{subsec:consensus}

% introduction and intuition on consensus
A central ingredient in our approach is the technique of {\em
  consensus estimates} that was introduced by Goldberg and Hartline
\cite{GH03}. A {\em consensus function} maps shared
randomness and a statistic to an estimate of the statistic.  The
objective of such a consensus function is that, when applied individually to each of a set
of statistics that are within some bounded range, with high
probability (in the shared randomness) the estimates will coincide,
i.e., there will be a {\em consensus} among the {\em estimates}.

\begin{definition}
\label{d:consensus}
For implicit parameter $c > 1$ and shared randomness $\shrand \sim U[0,1]$, the {\em consensus function} on statistic $\stat$ is,
$$
\consens(\shrand,\stat) = \floor{\stat}_{\{c^{\shrand+d} \,:\,d \in \integers\}},
$$ 
where $\floor{\stat}_S$ denotes $\stat$ rounded down to the nearest element
of $S$.
\end{definition}

\begin{lemma} \label{lemma:prob} \cite{GH03} 
For $c \geq \beta$, The probability (over randomization of $\shrand$)
that the consensus function is constant on interval $[\stat/\beta,\stat]$ is
$1-\log_c \beta$.
\end{lemma}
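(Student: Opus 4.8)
The plan is to reduce the statement to an elementary fact about the fractional part of a uniform random variable. First I would observe that $\consens(\shrand,\cdot)$ is constant on $[\stat/\beta,\stat]$ if and only if the geometric grid $S_\shrand = \{c^{\shrand+d} : d\in\integers\}$ has no element in the half-open interval $(\stat/\beta,\stat]$. Indeed, the rounding-down map $y\mapsto\floor{y}_{S_\shrand}$ is a step function whose value changes exactly at the points of $S_\shrand$, so it takes a single value throughout $[\stat/\beta,\stat]$ precisely when no point of $S_\shrand$ lies strictly above the left endpoint and at or below the right endpoint.

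Next I would take logarithms base $c$ to linearize everything. Writing $a = \log_c\stat$ and $\ell = \log_c\beta$, the grid $S_\shrand$ becomes the shifted integer lattice $\{\shrand + d : d\in\integers\}$, and the interval $(\stat/\beta,\stat]$ becomes $(a-\ell,\,a]$, an interval of length $\ell$. The hypothesis $c\geq\beta$ is exactly what guarantees $\ell = \log_c\beta \leq 1$; this is needed so that $(a-\ell,a]$ is short enough to contain at most one lattice point for any shift.

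The crux is then the claim that, for $\shrand\sim U[0,1]$, the probability that $\{\shrand + d : d\in\integers\}$ meets a fixed half-open interval of length $\ell\leq 1$ equals $\ell$. Since the lattice is invariant under integer translations, this event depends on $\shrand$ only through its position relative to the interval modulo $1$: there is an integer $d$ with $a-\ell < \shrand + d \leq a$ exactly when the fractional part of $a-\shrand$ lies in $[0,\ell)$. One then checks that, because $\ell\leq 1$, the set of $\shrand\in[0,1]$ satisfying this is a sub-interval of $[0,1]$ (possibly wrapping, i.e.\ split into two pieces at the endpoints) of total measure exactly $\ell$; by uniformity of $\shrand$ this is the desired probability. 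Taking the complement shows that the consensus function is constant on $[\stat/\beta,\stat]$ with probability $1-\ell = 1-\log_c\beta$.

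I expect the only real care needed is the bookkeeping of half-open versus closed endpoints and the modulo-$1$ wrap-around in the last step; there is no substantive obstacle. The boundary case $c=\beta$, where the probability is $0$, is a useful sanity check, since every half-open interval of length exactly $1$ contains precisely one point of any shifted integer lattice.
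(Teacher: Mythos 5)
The paper cites this lemma from \cite{GH03} without reproducing a proof, so there is nothing in the paper to compare against; your argument stands on its own. It is correct and is the standard proof: the key reductions (constancy of the rounding-down map on $[\stat/\beta,\stat]$ iff the grid avoids $(\stat/\beta,\stat]$; logarithmic change of variables turning the geometric grid into a shifted integer lattice and the interval into one of length $\log_c\beta\leq 1$; and the observation that the fractional part of $a-\shrand$ is uniform on $[0,1)$ so the lattice meets a half-open length-$\ell$ interval with probability exactly $\ell$) are all sound, and the endpoint conventions are handled correctly.
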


\section{Cross-checking}

\label{sec:cross-check}

Given some statistic on valuation profiles, we will be using the
consensus function (Definition~\ref{d:consensus}) to get an estimate of
this statistic, e.g., by calculating
$\consens(\shrand,\stat(\valsmi))$ for each $i$ where $\stat(\cdot)$
calculates the statistic for a valuation profile.  Notice that if we
had some mechanism $\mech_{\stat}$ that was parameterized by statistic
$\stat$ then, if there is consensus, simulations of
$\mech_{\consens(\shrand,\stat(\valsmi))}$ to determine the
  allocation $\alloci$ and payment $\pricei$ for agent $i$ are
  internally consistent.  I.e., the outcome produced by
  $\mech_{\stat}$ is feasible for any $\stat$; therefore, so is the combined
  outcome.  Unfortunately, when consensus is not achieved then these
  simulations may not be consistent.  

In this section we give a method of cross-checking to ensure that
consistent estimates of the statistic for some subset of the agents.
For environments with downward closed feasibility constraints, such a
method can be used in mechanism design as agents outside this
consistent subset can be rejected.

\begin{definition} 
\label{d:cross-check}
For shared randomness $\shrand$, statistic $\stat$, and consensus function $\consens$, and valuation profile $\vals$ calculate the following:
\begin{enumerate}
\item For all pairs $i \neq j \in \{1,\ldots,n\}$, calculate the
  consensus estimate $\cout_{i,j} = \consens(\shrand,\stat(\valsmi[i,j]))$.
\item $\sout$ is the set of agents $i$ that have consensus on
  $\cout_{i,j}$ for all $j$; or $\emptyset$ if no such $i$ exists.
\item $\cout$ is the consensus $\cout_{i,j}$ of any $i \in I$ and any
  $j$ (they are all the same).
\end{enumerate} 
The {\em cross-checked consensus function} is defined as
$$\crossconsens(\shrand,\stat,\vals) = (\cout,\sout).$$
\end{definition}

Cross-checked consensus estimates are non-manipulable in a strong
sense.  Whether or not an agent $i$ is in $\sout$ is not a function of
that agents value.  Furthermore, the final consensus is not a function
of the report of any agent $i \in \sout$.  This implies that mechanisms
which take the following form are incentive compatible.

\begin{definition}[cross-checked consensus estimate composition] 
Given an incentive compatible mechanism $\mech_{\stat}$ that is
parameterized by some statistic $\stat$ and a consensus function
$\consens$ for the statistic, compose them as follows:
\begin{enumerate}
\item Calculate cross-checked consensus estimate $(\cout,\sout) =
  \crossconsens(\shrand,\stat,\vals)$.
\item Simulate incentive compatible mechanism $\mech_{\cout}$ on $\vals$.
\item For agents $i \in \sout$ output result of simulation, reject all others.
\end{enumerate}
\end{definition}

\begin{theorem} 
Mechanisms produced by the cross-checked consensus estimate
construction are incentive compatible.
\end{theorem}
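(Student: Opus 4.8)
The plan is to reduce to Myerson's characterization \cite{M81}: fix an agent $i$ and the reported values $\valsmi$ of everyone else, and show that the composed mechanism's allocation $\alloci(z,\valsmi)$ is monotone non-decreasing in $z$ and that the payment it charges $i$ equals $z\,\alloci(z,\valsmi)-\int_0^z\alloci(y,\valsmi)\,dy$, the payment identity \eqref{eq:icpi}. Also fix the shared randomness $\shrand$ and, if the environment is randomized, the realized set system; dominant-strategy incentive compatibility that holds for every realization holds in expectation, so this restriction is harmless. Everything then comes down to one claim: from agent $i$'s point of view, neither the event ``$i\in\sout$'' nor the consensus value $\cout$ that parameterizes the simulation depends on $z=\vali$.

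To establish the claim I would read off Definition~\ref{d:cross-check}. First, $i\in\sout$ precisely when the estimates $\cout_{i,j}=\consens(\shrand,\stat(\valsmi[i,j]))$, $j\neq i$, all coincide; every one of these is a function of a valuation profile from which agent $i$ has been deleted, so the event $\{i\in\sout\}$ does not depend on $\vali$. Second, on that event $\cout$ equals the common value of the $\cout_{i,j}$, which is likewise a function of $\valsmi$ only. Here one must also check that $\cout$ is well defined, so that the phrase ``the consensus of any $i\in\sout$'' makes sense: for $i,k\in\sout$ the quantities $\cout_{i,k}$ and $\cout_{k,i}$ are identical — both equal $\consens(\shrand,\stat(\valsmi[i,k]))$, since deleting the set $\{i,k\}$ is order-independent — so the common value witnessed by $i$ agrees with the one witnessed by $k$. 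I expect this to be the delicate step: it is exactly the deletion of \emph{two} agents at a time (rather than one) that makes each pairwise estimate insensitive to both of the agents involved, and without it an agent could influence whether consensus forms or which value it takes, and thereby manipulate.

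Granting the claim, let $g(\valsmi)\in\{0,1\}$ be the ($\vali$-independent) indicator of $i\in\sout$, and let $t(\valsmi)$ be a parameter equal to $\cout$ when $g(\valsmi)=1$ and arbitrary otherwise. By construction agent $i$ receives allocation $\alloci(z,\valsmi)=g(\valsmi)\cdot\alloc_i^{\mech_{t(\valsmi)}}(z,\valsmi)$ and pays $g(\valsmi)\cdot\price_i^{\mech_{t(\valsmi)}}(z,\valsmi)$: when $g(\valsmi)=0$ agent $i$ is rejected so both are zero, and when $g(\valsmi)=1$ these are exactly the outputs of the simulation of $\mech_{\cout}$ on $\vals$. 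Since $\mech_t$ is incentive compatible for every fixed parameter $t$, $\alloc_i^{\mech_t}(\cdot,\valsmi)$ is monotone non-decreasing and $\price_i^{\mech_t}(\cdot,\valsmi)$ obeys the payment identity relative to it; multiplying by the non-negative constant $g(\valsmi)$, which is constant in $z$, preserves monotonicity and, because $z\mapsto z\,\alloc(z)-\int_0^z\alloc(y)\,dy$ is linear in the allocation rule, also preserves the payment identity — the case $g(\valsmi)=0$ being the trivial identity $0=z\cdot0-\int_0^z 0$. Hence $\alloci(\cdot,\valsmi)$ is monotone and the charge to $i$ satisfies \eqref{eq:icpi}, so by Myerson's characterization the composed mechanism is incentive compatible.
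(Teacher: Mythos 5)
Your proof is correct and matches the paper's (implicit) argument: the paper states the theorem without a formal proof, relying on the two-sentence observation just before the composition definition --- that whether $i\in\sout$ and the value of $\cout$ are both independent of $\vali$ for $i\in\sout$ --- and your write-up is exactly the faithful elaboration of that observation via Myerson's characterization, including the (useful, if minor) check that $\cout$ is well-defined because $\cout_{i,k}=\cout_{k,i}$.
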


%% \begin{proof}
%%
%%  Since Vickery is always feasible, we only need to show that mechanism 
%%  \ref{mech:cecc} always produces a feasible allocation.
%%
%%  Consider any two bidders $i$ and $i'$ who both have non-zero probability
%%  of allocation on some services.  That means for each of these bidders,
%%  the $n-1$ ironed consensus revenue curves constructed from the first step
%%  of the mechanism must the same (or else, they would be allocated with the
%%  probability of $0$ instead).  Let these curves be $\RevConEff_i$ and
%%  $\RevConEff_{i'}$ respectively.  That means
%%  $\RevConEff_i=\RevConEff_y^{\valsmi[\{i,i'\}]}=\RevConEff_{i'}$.  Thus
%%  they share the same ironed consensus revenue curve.  Hence, all bidders
%%  who have non-zero probability of allocation share the same ironed
%%  consensus revenue curve, denoted by $\RevConEff$.
%%
%%  Allocation of Mechanism \ref{mech:cecc} can be worded differently. We can
%%  apply the optimizer of $\PE{\evals}$ to everyone (including those who do
%%  not have consensus).  Such allocation is feasible.  At the end, allocation
%%  to the non-consensus bidders will be set to $0$.  This new allocation is
%%  still feasible since we are under a downward-closed setting. Thus, we have
%%  the claim.
%%
%% \end{proof}

% begin of local macro definitions --------------------------------------------

% end of local macro definitions ----------------------------------------------

\section{Profit Extraction Mechanism}
\label{sec:PE}

Using the techniques in the previous section we will construct a
consensus estimate for the revenue curve.  In this section we will
show how to design a mechanism with good revenue that is parameterized
by an approximation of the revenue curve.  Such a mechanism is termed
a {\em profit extractor}. 
Given a
target revenue curve that is upper-bounded by our actual revenue
curve, this mechanism will obtain at least the optimal envy-free
revenue for the target revenue curve.  The target revenue curve will
be provided to the mechanism in the form of the valuation profile
$\evals$ that generates it.  We will denote by $\erevs$ and $\evirts$
the revenue curve and virtual values for $\evals$.

% the mechanism ---------------------------------------------------------------
\begin{mechanism}[Profit Extractor, $\PE{\evals}$] \label{mech:PE}
  Parameterized by non-increasing valuation vector $\evals$:
  \begin{enumerate}
    \item Sort the bids in a non-increasing order, break ties arbitrary.
    If $\evali>\vali$ for some $i$, reject everyone and charge nothing.
    \item Assign weights $\evirts$ to agents in the same order as their values.
    \item Serve the set of agents to maximize the sum of their
      assigned weights.
  \end{enumerate}
\end{mechanism}

We will show that the IC revenue obtained from the profit extractor
for $\evals$ on $\vals$ is higher than the optimal envy-free revenue
for $\evals$.  Furthermore, for appropriately chosen $\evals$, this
revenue approximates the optimal envy-free revenue for $\vals$.

% supporting lemma: IC > EFO --------------------------------------------------
\begin{lemma}\label{lemma:PE:ICvsEFO}
  For any $\evals\leq\vals$, the revenue of the profit extractor on
  $\vals$ is at least the envy-free optimal revenue for $\evals$.
  Moreover, the inequality holds on each agent's payment, i.e.,
  $\IC{\PE{\evals}}_i(\vals)\geq\EFO_i(\evals)$.
\end{lemma}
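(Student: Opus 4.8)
The plan is to prove the per-agent bound $\IC{\PE{\evals}}_i(\vals)\ge\EFO_i(\evals)$, from which the revenue statement follows by summing over $i$. Let $\allocs^\ast=(\alloc^\ast_1,\dots,\alloc^\ast_n)$ be the maximizer for the weight vector $\evirts$. Since the revenue curve is concave, $\evirts$ is non-increasing, so $\allocs^\ast$ is monotone and, by the random tie-breaking built into the maximizer, satisfies the hypothesis of Lemma~\ref{lemma:EF}; hence $\allocs^\ast$ achieves the optimal envy-free revenue for $\evals$ and $\EFO_i(\evals)=\EF{\allocs^\ast}_i(\evals)=\sum_{j\ge i}\evali[j](\alloc^\ast_j-\alloc^\ast_{j+1})$ by~\eqref{eq:efpi}. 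Because $\evals\le\vals$, the rejection clause of $\PE{\evals}$ is not triggered at the truthful profile and the mechanism simply runs the allocation $\allocs^\ast$, so $\alloci(\vals)=\alloc^\ast_i$.

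The core step is to identify $\alloci(z,\valsmi)$, the allocation to agent $i$ as a function of his report $z$, and then apply the payment identity~\eqref{eq:icpi}. Lowering $z$ only pushes agent $i$ down in the sorted order; while $z$ lies in the window $[\vali[k+1],\vali[k]]$ he occupies rank $k$ and, absent rejection, is served with probability $\alloc^\ast_k$ (the weights are assigned by rank, so the maximizer's allocation at rank $k$ does not depend on agent $i$'s actual value). Raising $z$ raises every order statistic of the profile, hence can only make the rejection test ``the sorted bids dominate $\evals$'' easier to pass; so there is a threshold $z^\ast$ with $\alloci(z,\valsmi)=0$ for $z<z^\ast$ and $\alloci(z,\valsmi)=\alloc^\ast_{k(z)}$ for $z\ge z^\ast$. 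I would then pin down $z^\ast$. When agent $i$ sits at rank $k$ the sorted profile equals $(\vali[1],\dots,\vali[i-1],\vali[i+1],\dots,\vali[k],\,z,\,\vali[k+1],\dots,\vali[n])$, which dominates $\evals$ exactly when $\vali[\ell+1]\ge\evali[\ell]$ for $i\le\ell\le k-1$ and $z\ge\evali[k]$. Let $m$ be the smallest rank $k\ge i$ with $\evali[k]>\vali[k+1]$ (set $\vali[n+1]=0$; if there is none the argument is the same with $m=n+1$ and $z^\ast=0$). By minimality, $\vali[\ell+1]\ge\evali[\ell]$ for all $i\le\ell<m$, so ranks $i,\dots,m-1$ are rejection-free throughout their windows, while at rank $m$, where $\evali[m]\in[\vali[m+1],\vali[m]]$ (using $\evals\le\vals$ and the choice of $m$), rejection begins exactly at $z=\evali[m]$; thus $z^\ast=\evali[m]$.

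Feeding this step function into~\eqref{eq:icpi} and telescoping across the rank windows gives
\[
  \IC{\PE{\evals}}_i(\vals)\;=\;z^\ast\,\alloc^\ast_m\;+\;\sum_{j=i}^{m-1}\vali[j+1]\,(\alloc^\ast_j-\alloc^\ast_{j+1}).
\]
Compare this termwise with $\EFO_i(\evals)=\sum_{j\ge i}\evali[j](\alloc^\ast_j-\alloc^\ast_{j+1})$. For $i\le j<m$ we have $\vali[j+1]\ge\evali[j]$ and $\alloc^\ast_j\ge\alloc^\ast_{j+1}$, so the sum above is at least $\sum_{j=i}^{m-1}\evali[j](\alloc^\ast_j-\alloc^\ast_{j+1})$; and since $\alloc^\ast_m=\sum_{j\ge m}(\alloc^\ast_j-\alloc^\ast_{j+1})$ and $\evali[j]$ is non-increasing in $j$, we get $z^\ast\alloc^\ast_m=\evali[m]\sum_{j\ge m}(\alloc^\ast_j-\alloc^\ast_{j+1})\ge\sum_{j\ge m}\evali[j](\alloc^\ast_j-\alloc^\ast_{j+1})$. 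Adding the two inequalities yields $\IC{\PE{\evals}}_i(\vals)\ge\sum_{j\ge i}\evali[j](\alloc^\ast_j-\alloc^\ast_{j+1})=\EFO_i(\evals)$, which is the claim. I expect the main obstacle to be the structural analysis of $z^\ast$ in the second paragraph — tracking how the global rejection rule interacts with agent $i$'s changing rank, and in particular the observation that every rank strictly between $i$ and $m$ is rejection-free, which is exactly what makes the ``extra'' payment terms $\vali[j+1]$ dominate the envy-free terms $\evali[j]$. The degenerate cases ($m=i$, so the sum is empty and only the first term survives; virtual-value ties making $\alloc^\ast_j=\alloc^\ast_{j+1}$; ties among reported values) fall out of the same computation and need only a brief check, using the favorable tie-breaking of the maximizer.
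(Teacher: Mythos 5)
Your proof is correct, and it proves the same per-agent statement by the same underlying idea as the paper: analyze agent $i$'s IC allocation as a function of her report $z$ and show that the rejection clause ``$(z,\valsmi)\ge\evals$'' forces the allocation breakpoints to occur at values no smaller than the corresponding $\evali[j]$'s, which makes the IC payment dominate the envy-free payment $\sum_{j\ge i}\evali[j](\ealloc_j-\ealloc_{j+1})$. The execution differs in a way worth noting. The paper argues pointwise and locally: for each $j>i$, to secure allocation at least $\ealloci[j]$ the agent needs both $z\ge\vali[j+1]$ and $z\ge\evali[j]$, so the IC ``allocation rule'' drops from $\ealloci[j]$ to at most $\ealloci[j+1]$ no earlier than $\evali[j]$; this gives pointwise dominance of the EF rule over the IC rule, and the payment inequality follows as an area comparison. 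You instead fully characterize the IC allocation rule, identifying the single rejection threshold $z^\ast=\evali[m]$ (where $m$ is the first rank at which $\evali[m]>\vali[m+1]$), plug this step function into the payment identity, and compare the resulting telescoping sum term-by-term against the envy-free payment. Your route is more explicit and verifiable but requires the delicate structural observation that ranks $i,\dots,m-1$ are uniformly rejection-free; the paper's per-$j$ argument sidesteps the need to locate a single global threshold and so is a bit lighter on casework (in particular it never needs to worry about which rank triggers the rejection first). Both proofs handle ties by appealing to the random tie-breaking of the maximizer, and both correctly reduce the total-revenue claim to the per-agent one by summing.
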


% -----------------------------------------------------------------------------
\begin{proof}
We show the second condition of the lemma for any agent $i$ (the first
condition follows).  Let $\eallocs$ be the allocation for
$\EFO(\evals)$.  This is the same allocation as used by $\PE{\evals}$
unless $\vals \geq \evals$ fails to hold.

First, notice that the IC payments, from equation \eqref{eq:icpi}, and
EF payments, from equation \eqref{eq:efpi}, correspond to the area in
the region bounded by $\alloc \leq \ealloci$ (above), $\val \geq 0$
(left), and the ``allocation rule'' (bottom right).  For IC payments,
this allocation rule is the probability that the agent is served for
any possible misreport $z$.  For EF payments, this ``allocation rule''
is the smallest monotone function that upper-bounds the point set
$\{(\evali, \ealloci) \,:\, i \in \{1,\ldots,n\}\}$.  To prove the
lemma we need only show that the IC allocation rule gives a weaker
bound than the EF ``allocation rule.''

For any $j > i$, the EF allocation rule drops from $\ealloci[j]$ to
$\ealloci[j+1]$ at $\evali[j]$.  We claim that the IC allocation rule
makes the same drop but at a value that is at least $\evali[j]$.  To
see this claim, consider the minimum bid $z$ that agent $i$ can make
to secure allocation probability at least $\ealloci[j]$.  First, she
must out bid agent $j+1$, i.e., $z \geq \vali[j+1]$.  Assuming she
out bids $j+1$, she must also $z \geq \evali[j]$ otherwise, the
condition $(z,\vals_{-i}) \geq \evals$ is not met and all agents are
rejected.  Therefore, if $i$ bids $z < \evali[j]$ then she is
allocated with probability at most $\ealloci[j+1]$.  Hence the IC
allocation rule drops from $\ealloci[j]$ to (at most) $\ealloci[j+1]$
at value at least $\evali[j]$.
\end{proof}

% supporting lemma: beta approx ------------------------------------------
\begin{lemma}\label{lemma:PE:EFOvsEFO}
  For any $\evals$ and $\vals$ with $\erevs \geq \tfrac{1}{\beta} \revs$, the envy-free optimal revenue for $\evals$ is a $\beta$-approximation to that from $\vals$, i.e.,
$\EFO(\evals)\geq\tfrac{1}{\beta}\cdot\EFO(\vals).$
\end{lemma}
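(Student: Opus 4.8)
The plan is to express both $\EFO(\evals)$ and $\EFO(\vals)$ via Lemma~\ref{lemma:EF}, which writes the optimal envy-free revenue as an optimization of virtual surplus that, by the structural description in the preliminaries, can be rewritten in terms of the revenue curve: for a monotone allocation $\allocs$ one has $\EF{\allocs}(\vals) = \sum_i \revi \cdot (\alloci - \alloci[i+1])$. The key observation is that this last expression depends on $\vals$ only through the revenue curve $\revs$, and it is linear (in fact, monotone) in the coordinates of $\revs$. So if I take the allocation $\eallocs$ that is optimal for $\evals$ and evaluate the revenue-curve formula using $\revs$ instead of $\erevs$, I can only do better, and conversely using $\erevs \geq \tfrac1\beta \revs$ coordinatewise I can lower-bound things by $\tfrac1\beta$ times the $\revs$-version.

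Concretely, the steps I would carry out are: (1) Let $\allocs^\star$ be the virtual-surplus maximizer for $\vals$, so $\EFO(\vals) = \sum_i \revi\,(\alloci[i]^\star - \alloci[i+1]^\star)$ by Lemma~\ref{lemma:EF}. (2) Feed the \emph{same} allocation $\allocs^\star$ into the revenue-curve formula for $\evals$: since $\erevs \geq \tfrac1\beta\revs$ and the coefficients $\alloci[i]^\star - \alloci[i+1]^\star$ are nonnegative (monotonicity of $\allocs^\star$), we get $\sum_i \erevi\,(\alloci[i]^\star - \alloci[i+1]^\star) \geq \tfrac1\beta \sum_i \revi\,(\alloci[i]^\star - \alloci[i+1]^\star) = \tfrac1\beta \EFO(\vals)$. (3) Finally, $\EFO(\evals)$ is the \emph{maximum} of $\sum_i \erevi\,(\alloci[i] - \alloci[i+1])$ over monotone $\allocs$ satisfying the tie-breaking condition of Lemma~\ref{lemma:EF}, hence is at least the value achieved by $\allocs^\star$, giving $\EFO(\evals) \geq \tfrac1\beta \EFO(\vals)$.

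The one technical wrinkle — and the step I expect to need the most care — is the side condition in Lemma~\ref{lemma:EF}: the identity $\EF{\allocs}(\vals) = \sum_i \revi(\alloci - \alloci[i+1])$ requires $\alloci = \alloci[i+1]$ whenever the relevant virtual values are equal, and the ``relevant'' virtual values here are those of $\evals$, not $\vals$. So in step (2) I must check that $\allocs^\star$, which was tie-broken with respect to $\virts$, also respects ties of $\evirts$ — or, more robustly, I should argue at the level of the sum $\sum_i \erevi(\alloci - \alloci[i+1])$ directly: this quantity is a valid lower bound on $\EFO(\evals)$ for \emph{any} monotone $\allocs$ (one can always re-tie-break to satisfy the condition without decreasing the sum, since merging a tied block only averages the revenue-curve contributions, which on the concave curve can only help). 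Stating this monotone-allocation lower bound as the workhorse inequality, rather than invoking the exact identity, sidesteps the tie-breaking bookkeeping cleanly. The inequality $\erevs \geq \tfrac1\beta\revs$ is used exactly once, coordinatewise, against nonnegative weights, so the rest is immediate.
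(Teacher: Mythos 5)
Your proof is correct and follows essentially the same route as the paper: plug the $\vals$-optimal allocation into the revenue-curve form of envy-free revenue, apply the coordinatewise bound $\erevi\geq\tfrac{1}{\beta}\revi$ against the nonnegative differences $\alloci^\star-\alloci[i+1]^\star$, and finish with optimality of $\eallocs$ for $\evals$. One small imprecision in the ``wrinkle'' discussion: the sum $\sum_i\erevi(\alloci-\alloci[i+1])$ is a valid lower bound on $\EFO(\evals)$ for any \emph{feasible} monotone $\allocs$, not any monotone $\allocs$ (the maximizer $\eallocs$ equals $\argmax\sum_i\evirti\alloci$ over feasible allocations, and by summation-by-parts that sum equals $\sum_i\erevi(\alloci-\alloci[i+1])$); since $\allocs^\star$ is feasible this does not affect your argument, and the tie-breaking concern you raise is thereby handled without any re-averaging step.
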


\begin{proof}
Let $\allocs$ and $\eallocs$ be the allocations $\EFO(\vals)$ and
$\EFO(\evals)$, respectively.
  \begin{align*}
    \EFO(\evals) 
    &= \sum\nolimits_i \erevi \cdot(\ealloci - \ealloci[i+1])
    \geq \sum\nolimits_{i}\erevi\cdot (\alloci-\alloci[i+1])
    \geq \tfrac{1}{\beta} \sum\nolimits_{i}\revi\cdot(\alloci-\alloci[i+1])
    = \tfrac{1}{\beta} \EFO(\vals).
  \end{align*}
 The first inequality follows from the optimality of $\eallocs$ for $\evals$ and Lemma~\ref{lemma:EF}.
 The second inequality follows from monotonicity of $\allocs$ and the
 assumption that $\forall i$, $\erevi\geq\tfrac{1}{\beta}\cdot\revi$.
\end{proof}

Combining these lemmas, we see that with the right $\evals$,
$\PE{\evals}(\vals)$ can approximate the optimal envy free revenue on
$\vals$.

% the main theorem ------------------------------------------------------------
\begin{theorem} \label{thm:PE}
  For any $\evals \leq \vals$ with $\erevs \geq \tfrac{1}{\beta}
  \revs$, the profit extractor for $\evals$ on $\vals$ is a
  $\beta$-approximation to the optimal envy-free revenue for $\vals$, i.e., $
  \IC{\PE{\evals}}(\vals) \geq \tfrac{1}{\beta}\cdot\EFO(\vals).  $
\end{theorem}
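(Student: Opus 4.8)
The plan is to chain together the two preceding lemmas, which already do all the real work. Theorem~\ref{thm:PE} asserts two things simultaneously: that $\PE{\evals}$ extracts at least $\EFO(\evals)$ from $\vals$, and that $\EFO(\evals)$ is a $\beta$-approximation to $\EFO(\vals)$. The first is exactly Lemma~\ref{lemma:PE:ICvsEFO} under the hypothesis $\evals \leq \vals$; the second is exactly Lemma~\ref{lemma:PE:EFOvsEFO} under the hypothesis $\erevs \geq \tfrac{1}{\beta}\revs$. So the proof is essentially a one-line composition.

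Concretely, I would write: by Lemma~\ref{lemma:PE:ICvsEFO}, since $\evals \leq \vals$, we have $\IC{\PE{\evals}}(\vals) \geq \EFO(\evals)$ (summing the per-agent inequality $\IC{\PE{\evals}}_i(\vals) \geq \EFO_i(\evals)$ over all $i$). By Lemma~\ref{lemma:PE:EFOvsEFO}, since $\erevs \geq \tfrac{1}{\beta}\revs$, we have $\EFO(\evals) \geq \tfrac{1}{\beta}\EFO(\vals)$. Combining,
\[
\IC{\PE{\evals}}(\vals) \;\geq\; \EFO(\evals) \;\geq\; \tfrac{1}{\beta}\cdot\EFO(\vals),
\]
which is the claimed bound.

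There is essentially no obstacle here — the theorem is a corollary of the two lemmas and the only thing to check is that the hypotheses of the theorem ($\evals \leq \vals$ and $\erevs \geq \tfrac{1}{\beta}\revs$) are precisely the hypotheses needed to invoke Lemma~\ref{lemma:PE:ICvsEFO} and Lemma~\ref{lemma:PE:EFOvsEFO} respectively, which they are. If anything deserved a sentence of care, it would be noting that the first lemma is stated per-agent and we are summing to get the revenue statement, but that is immediate since $\IC{\PE{\evals}}(\vals) = \sum_i \IC{\PE{\evals}}_i(\vals)$ and $\EFO(\evals) = \sum_i \EFO_i(\evals)$.
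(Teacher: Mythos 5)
Your proposal is correct and is exactly what the paper intends: the theorem is stated immediately after the sentence ``Combining these lemmas\dots,'' so the intended proof is precisely the one-line chaining of Lemma~\ref{lemma:PE:ICvsEFO} and Lemma~\ref{lemma:PE:EFOvsEFO} that you give. Your extra remark about summing the per-agent inequality is a fine (if unnecessary) bit of care.
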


% undefine local macro definitions --------------------------------------------
\makeatletter
\makeatother

% begin of local macro definitions ---------------------------------------
\newcommand{\point}[2]{\mathrm{Q}_{#2}^{#1}}
\newcommand{\naj}[1][j]{n_{#1}}          %% number above alpha^j
\newcommand{\enaj}[1][j]{\tilde{n}_{#1}} %% estimated number above alpha^j
\newcommand{\pointj}[1][j]{Q_{#1}} 
\newcommand{\na}[2]{{n_{#2}(#1)}} 
\newcommand{\ena}[2]{{\tilde{n}_{#2}(\shrand,#1)}}
\newcommand{\crevs}[1]{\erevs(\shrand,#1)}
\newcommand{\crevi}[2]{\erevi[#2](\shrand,#1)}
\newcommand{\cej}[1][j]{{\cal E}_{#1}}   %% consensus event j

% new section ------------------------------------------------------------
\section{Consensus Estimates of Revenue Curves}
\label{sec:consensus}

% intro to the section ---------------------------------------------------
Our objective now is to get a consensus estimate of the revenue curve.
We will express the estimate revenue curve, $\erevs$, in terms of an
estimate valuation profile, $\evals$, that generates it.

% numabove ---------------------------------------------------------------
In addition to the implicit parameter $c>1$ in the definition of
consensus (Definition~\ref{d:consensus}) we will also use implicit
parameter $\alpha>1$ and a minimum required support $m \in
\integers_+$.  A statistic we will be interested in getting consensus
on is the number of agents with values at least $\alpha^j$ for any
given $j$.  We will use $\na{\vals}{j}$ to denote this statistic.  As
per our notation in the previous section, we will denote
$\ena{\vals}{j}=\ceil{\consens(\shrand,\na{\vals}{j})}$ where we round
the estimate up to the nearest integer because it is an integer
statistic.  Estimates that do not have the minimum required support of
$\enaj(\shrand,\vals) \geq m$ will be discarded.  We will use the
remaining estimates to construct an estimate of the
valuation profile $\evals(\shrand,\vals)$ and revenue curve
$\erevs(\shrand,\vals)$ as follows.

% consensus valuation profile --------------------------------------------
\begin{definition}[estimated revenue curve and valuation profile]
  \label{d:evals}
For any $j$ for which the estimates $\enaj$ of the number of agents
with values $\alpha^j$ is at least the minimum required support $m$,
define point $\pointj = (\enaj, \alpha^j\enaj)$.  The {\em estimated
  revenue curve}, $\erevs$, is the minimum non-decreasing concave
function that upper-bounds the point set $\{\pointj
\}_{j\in\integers}$ and the origin.  Let $j_k$ denote the
$k$\textsuperscript{th} largest index such that point $\pointj[j_k]$
is on $\erevs$.  The {\em estimated valuation profile}, $\evals$, has
$\enaj[j_{k}]-\enaj[j_{k-1}]$ values equal to $\alpha^{j_k}$.  Pad the
remainder of $\evals$ with zeros to get an $n$-vector.
\end{definition}

% -----------------------------------------------------------------------------
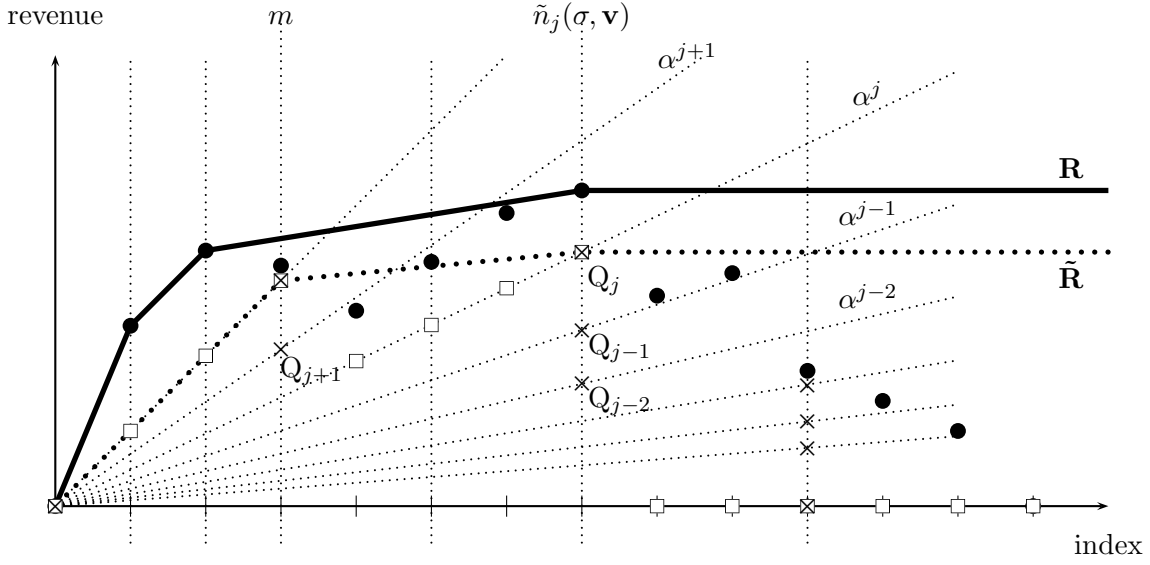
\begin{figure}[ht]
\begin{center}

\begin{pspicture}(0,-0.5)(14,6.5)
  \figcoord
  \figrevs
  \figerevs
  \rput(8.4,6){$\alpha^{j+1}$}\rput(3.4,1.8){$\R{Q}_{j+1}$}
  \rput(10.8,5.5){$\alpha^j$}\rput(7.3,3){$\R{Q}_j$}
  \rput(10.8,3.9){$\alpha^{j-1}$}\rput(7.5,2.1){$\R{Q}_{j-1}$}
  \rput(10.8,2.8){$\alpha^{j-2}$}\rput(7.5,1.4){$\R{Q}_{j-2}$}
  \rput(7,6.5){$\ena{\vals}{j}$}
  \rput(3,6.5){$m$}
\end{pspicture}

  \caption{The dotted vertical lines are $\ceil{c^{d+\shrand}}$ for
    $d\in\mathbb{Z}$; while the diagonal lines from the origin
    represent value $\alpha^j$ (as their slopes) for $j\in\mathbb{Z}$.
    Each value $\vali$ is represented with ``$\bullet$'' at point $(i,
    i\vali)$.  The line upper-bounding the black dots represents the
    original revenue curve, $\revs$.  For each diagonal line
    representing $\alpha^j$, $\naj(\vals)$ is the number of values
    that lie above it; the dotted vertical line immediately to the
    left of the right-most among these values represents
    $\ena{\vals}{j}$; finally the intersection between this line and
    $\alpha^j$ is the point $\pointj$, represented with ``$\times$''.
    $\erevs$ and $\evals$ are constructed from estimates
    $\enaj(\shrand,\vals) \geq m$ and their corresponding $\pointj$s.
    The thick dotted function upper-bounding these $\pointj$s is
    $\erevs$, and each $\evali$ is represented by the point
    $(i,i\evali)$ with a ``{\small$\Box$}''.}
  \label{fig:curvedefs}
\end{center}
\end{figure}

In the above construction $\evals$ is the smallest (point-wise)
valuation profile that has revenue curve $\erevs$, and furthermore,
$\evals\leq\vals$.  For statistical estimates $\enaj(\shrand,\vals)$
the estimated revenue curve and valuation profile will be denoted
$\evals(\shrand,\vals)$ and $\erevs(\shrand,\vals)$.  Our goal now is
to show that with high probability the estimated revenue curve (and
valuation profile) has consensus when a few agents, $S$, are omitted,
i.e., $\erevs(\shrand,\vals) = \erevs(\shrand,\vals_{-S})$.  To do
this we define a notion of relevance for statistics $j$ and show that
with high probability there is simultaneous consensus for all relevant
statistics.

% t-consensus ------------------------------------------------------------
\begin{definition}[$t$-consensus on $\vals$]\label{d:tcon}
  Given a fixed valuation vector $\vals$, a positive integer constant
  $t$ and a fixed choice of $\shrand$, the $j$th statistic has a \emph{$t$-consensus on $\vals$} if for every set $S\subset\{1,2,\ldots,n\}$ of
  no more than $t$ elements,
  \begin{equation*}
    \ena{\vals}{j} = \ena{\valsmi[S]}{j}.
  \end{equation*}
\end{definition}

% relevant value ---------------------------------------------------------
\begin{definition}
\label{def:relevancy}
  For a given valuation vector $\vals$ and a positive integer $t$, the
  $j$th statistic $\naj$ is \emph{relevant} if there exists
  $\shrand\in[0,1]$, and a set $S\subseteq\{1,2,\ldots,n\}$ of no more
  than $t$ elements such that the point $\pointj(\shrand,\valsmi[S])$ is on
  $\erevs(\shrand,\valsmi[S])$.
\end{definition}

Notice that when $t$-consensus happens for a relevant statistic $j$, then
points $\pointj(\shrand,\vals)$ and $\pointj(\shrand,\vals_{-S})$ in
the construction of the estimated revenue curve are identical.

We now argue that the probability that any statistic $j$ has does not
have consensus is roughly proportional to $1/\naj(\vals)$, that for
relevant statistics the $\naj(\vals)$ values are geometrically
increasing, and thus the union bound implies that all estimates of
relevant statistics, and thus the estimated revenue curve, have
consensus with high probability.  This approach is adapted from
Goldberg and Hartline~\cite{GH05}.

% single probability -----------------------------------------------------
\begin{lemma}\label{lemma:singleprob} For any $\vals$ and $\shrand \sim U[0,1]$,
  the probability that the $j$th statistic has a $t$-consensus on $\vals$ is at least
$
    1+\log_c \left(1 - \tfrac{t}{\naj(\vals)}\right).
$
\end{lemma}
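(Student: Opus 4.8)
The plan is to analyze the event that the $j$th statistic fails to have a $t$-consensus on $\vals$ and show it has probability at most $-\log_c(1 - t/\naj(\vals))$. Recall that $\ena{\vals}{j} = \ceil{\consens(\shrand, \na{\vals}{j})}$ and, by the definition of $\consens$, this equals $\na{\vals}{j}$ rounded down to the nearest element of $\{c^{\shrand+d} : d\in\integers\}$ and then rounded up to an integer. First I would observe that for any $S$ with $|S|\le t$, removing agents from $\vals$ can only decrease the count, so $\na{\valsmi[S]}{j}\le\na{\vals}{j}$; moreover, since at most $t$ agents are removed, $\na{\valsmi[S]}{j}\ge\na{\vals}{j}-t$. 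Hence all the counts $\na{\valsmi[S]}{j}$ over admissible $S$ lie in the interval $[\na{\vals}{j}-t,\ \na{\vals}{j}]$, equivalently in $[\na{\vals}{j}(1-t/\naj(\vals)),\ \na{\vals}{j}]$.

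The next step is to reduce $t$-consensus to the consensus function being constant on that interval. If $\consens(\shrand,\cdot)$ takes a single value on the whole interval $[\na{\vals}{j}(1-t/\naj(\vals)),\ \na{\vals}{j}]$, then in particular $\consens(\shrand,\na{\valsmi[S]}{j}) = \consens(\shrand,\na{\vals}{j})$ for every admissible $S$; applying the (monotone) ceiling to both sides preserves equality, so $\ena{\valsmi[S]}{j} = \ena{\vals}{j}$ for all such $S$, which is exactly Definition~\ref{d:tcon}. So it suffices to lower-bound the probability that $\consens(\shrand,\cdot)$ is constant on an interval of the form $[\stat/\beta,\stat]$ with $\stat = \naj(\vals)$ and $\beta = \naj(\vals)/(\naj(\vals)-t) = 1/(1 - t/\naj(\vals))$. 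This is precisely Lemma~\ref{lemma:prob}, which (for $c\ge\beta$) gives probability $1 - \log_c\beta = 1 - \log_c\frac{1}{1 - t/\naj(\vals)} = 1 + \log_c(1 - t/\naj(\vals))$, matching the claimed bound. (When $t\ge\naj(\vals)$ the claimed quantity is $-\infty$ or undefined and the statement is vacuous; when $c<\beta$ it is nonpositive and also vacuous, so we may assume $c\ge\beta$.)

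The one subtlety — and the step I expect to require the most care — is the interaction between the ceiling operation and the rounding-down in $\consens$. Lemma~\ref{lemma:prob} is stated for the raw consensus function, and I am applying it to the pre-ceiling quantity $\consens(\shrand,\na{\cdot}{j})$; I need to make sure that constancy of the raw consensus value on the interval really does transfer to constancy of its ceiling, which it does since $x\mapsto\ceil{x}$ is a (deterministic) function, so equal inputs give equal outputs. A second point worth checking is that I am invoking Lemma~\ref{lemma:prob} with $\beta$ possibly non-integer, which is fine as that lemma makes no integrality assumption on $\beta$. Assembling these pieces yields the bound, so the probability of $t$-consensus for the $j$th statistic is at least $1 + \log_c(1 - t/\naj(\vals))$, as desired. $\qed$
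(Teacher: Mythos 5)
Your proof is correct and follows essentially the same route as the paper's: establish $\naj(\vals)-t \le \na{\valsmi[S]}{j} \le \naj(\vals)$ for all $|S|\le t$, invoke Lemma~\ref{lemma:prob} with $\beta = \naj(\vals)/(\naj(\vals)-t)$, and simplify via log rules. You actually spell out two steps the paper leaves implicit — that constancy of $\consens$ on the full interval suffices (the relevant counts may not fill it), and that the subsequent $\ceil{\cdot}$ preserves agreement — so no concerns there.
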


\begin{proof}
  Observe that for every set $S$ of no more than $t$ elements,
  $\na{\vals}{j}-t \leq \na{\valsmi[S]}{j} \leq \na{\vals}{j}$.  These
  inequalities hold since when some bids are removed, the number of
  bids above any $\alpha^j$ decrease, but only by at most the size of
  the removed set.  Thus the probability that
  $\consens(\shrand,\na{\vals}{j})=\consens(\shrand,\na{\valsmi[S]}{j})$
  is at least $1-\log_c\tfrac{\na{\vals}{j}}{\na{\vals}{j}-t}$ as suggested by
  Lemma~\ref{lemma:prob}. The lemma follows from the power rule for
  logarithm.
\end{proof}

% bound on alpha ---------------------------------------------------------
\begin{lemma}
\label{lemma:alpha}
  The values above successive relevant statistics are bounded by a
  geometrically increasing function: for any relevant statistic $j$,
  $\naj(\vals) \geq m\alpha^{r-j}$ where $r$ is the largest index of
  any relevant statistic.
\end{lemma}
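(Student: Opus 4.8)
The plan is to place every relevant point on a concave, non-decreasing curve through the origin and then read the count $\naj(\vals)$ off the horizontal coordinate of that point, exploiting that the highest-index point sitting on such a curve already has horizontal coordinate at least $m$. Two monotonicity observations set this up. First, deleting agents can only decrease counts, so $\na{\valsmi[S]}{j}\le\naj(\vals)$ for every $S$; second, since $\consens$ rounds its argument down while $\ena{\cdot}{j}$ then rounds up to the nearest integer, $\ena{\valsmi[S]}{j}=\ceil{\consens(\shrand,\na{\valsmi[S]}{j})}\le\na{\valsmi[S]}{j}$. Hence for any pair $(\shrand,S)$ witnessing relevance of $j$ we get $\naj(\vals)\ge\na{\valsmi[S]}{j}\ge\ena{\valsmi[S]}{j}$, so it suffices to lower bound $\ena{\valsmi[S]}{j}$ on a witnessing curve. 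Moreover, for the point $\pointj(\shrand,\valsmi[S])$ even to be defined it must clear the support threshold, $\ena{\valsmi[S]}{j}\ge m$; applied to $j=r$ this already gives the base case $\naj[r](\vals)\ge m = m\alpha^{r-r}$.

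For the main step, fix a relevant $j$ and a witnessing pair $(\shrand,S)$, so $\pointj(\shrand,\valsmi[S])=(\ena{\valsmi[S]}{j},\,\alpha^{j}\ena{\valsmi[S]}{j})$ lies on $\erevs(\shrand,\valsmi[S])$. Since $\evals(\shrand,\valsmi[S])\le\valsmi[S]\le\vals$ pointwise, this curve is concave, non-decreasing, and passes through the origin (and, incidentally, lies weakly below $\revs(\vals)$). Let $r'$ be the largest index whose point lies on $\erevs(\shrand,\valsmi[S])$; then $\pointj[r'](\shrand,\valsmi[S])$ exists, so $\ena{\valsmi[S]}{r'}\ge m$, and it has height $\alpha^{r'}\ena{\valsmi[S]}{r'}$. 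Because $r'\ge j$, the cutoff $\alpha^{r'}$ is at least $\alpha^{j}$, hence $\na{\valsmi[S]}{r'}\le\na{\valsmi[S]}{j}$ and, by monotonicity of $\consens$ and $\ceil{\cdot}$, $\ena{\valsmi[S]}{r'}\le\ena{\valsmi[S]}{j}$; that is, $\pointj[r']$ lies weakly to the left of $\pointj$ on the curve. As the curve is non-decreasing, the height of $\pointj$ is at least that of $\pointj[r']$, i.e. $\alpha^{j}\ena{\valsmi[S]}{j}\ge\alpha^{r'}\ena{\valsmi[S]}{r'}\ge\alpha^{r'}m$. Rearranging and invoking the first paragraph, $\naj(\vals)\ge\ena{\valsmi[S]}{j}\ge m\,\alpha^{r'-j}$.

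What remains is to replace $r'$ by the global maximal relevant index $r$. Since the same pair $(\shrand,S)$ also witnesses relevance of $r'$, we have $r'\le r$, so this comes down to showing that one may always choose a witness for $j$ whose curve reaches index $r$. I expect this to be the main obstacle: an arbitrary witness exposing $\pointj$ might have discarded exactly the large-value agents responsible for index $r$, so the natural route is to start from a witness realizing $r$ rather than an arbitrary one and argue that $\pointj$ remains on the corresponding estimated curve, keeping the omitted set within $t$ throughout. Once $r'=r$ is secured, the displayed bound is exactly $\naj(\vals)\ge m\alpha^{r-j}$, which together with the base case completes the proof.
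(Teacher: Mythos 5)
Your argument retraces the paper's own chain of inequalities: with a witness $(\shrand,S)$ for the relevance of $j$, you observe $\naj(\vals)\ge\na{\valsmi[S]}{j}\ge\ena{\valsmi[S]}{j}$, place $\pointj(\shrand,\valsmi[S])$ on the concave non-decreasing curve through the origin, and compare it to the leftmost on-curve point to get $\alpha^{j}\ena{\valsmi[S]}{j}\ge\alpha^{r'}\ena{\valsmi[S]}{r'}\ge m\alpha^{r'}$. That is exactly the paper's mechanism. The difference is that the paper simply writes $\alpha^{j}\ena{\valsmi[S]}{j}\ge\alpha^{r}\ena{\valsmi[S]}{r}\ge m\alpha^{r}$, directly comparing $\pointj$ to $\pointj[r]$ on the \emph{same} estimated curve; this silently presupposes that $\pointj[r](\shrand,\valsmi[S])$ is not discarded (i.e.\ $\ena{\valsmi[S]}{r}\ge m$) for the very witness you chose for $j$, which is precisely the step you flag.

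So you have not taken a different route; you have reproduced the paper's route more carefully and isolated the step the paper states without justification. Your proposal is incomplete because you stop short of closing the gap between $r'$ and $r$: proving $\naj(\vals)\ge m\alpha^{r'-j}$ with $r'\le r$ gives a \emph{weaker} bound than claimed, and your suggestion of starting from a witness for $r$ runs into the symmetric problem ($\pointj$ need not be on that other curve). A complete argument would need an additional observation — for instance, that the witness for $j$ can always be chosen so that the $r$th statistic survives the support cutoff, or an alternative lower bound on $\enaj[r]$ for arbitrary $\shrand$ — and as written neither you nor the paper supplies it. If you intend to submit this, either fill that step or explicitly note that you are reproducing the paper's implicit assumption.
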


\begin{proof}
   First note that the largest index of a relevant statistic $r$ is
   well defined.  For any $j$ that is relevant, it must be that
   $\naj(\vals) \geq m$; otherwise, $j$ would be discarded by the
   estimated revenue curve construction.  Thus, the largest index that
   may not be discarded is $\floor{\vali[m]}_{\{\alpha_j \,:\,j \in
     \integers\}}$.

  From the definition of $\ena{\valsmi[S]}{j}$, we have
  $\alpha^j\ena{\valsmi[S]}{j} \leq \alpha^j\na{\valsmi[S]}{j} \leq
  \alpha^j\na{\vals}{j}$.  Since $j$ is relevant, the corresponding
  point $\pointj(\shrand,\valsmi[S])$ must be higher than
  $\pointj[r](\shrand,\valsmi[S])$; therefore,
  $\alpha^j\ena{\valsmi[S]}{j}\geq\alpha^r\ena{\valsmi[S]}{r} \geq
  m\alpha^r$.  Combining this with the previous inequality, we have
  the desired claim.
\end{proof}

% total probability ------------------------------------------------------
\begin{lemma}\label{lemma:totalprob}
  The probability of $t$-consensus at all relevant values is at least
  \begin{equation*}
    1+\log_c\left[1-\tfrac{t\alpha}{m(\alpha-1)}\right].
  \end{equation*}
\end{lemma}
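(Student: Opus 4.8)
The plan is to bound the probability of the \emph{bad} event — that some relevant statistic fails to have a $t$-consensus — by a union bound, and then control that sum using the geometric growth of the $\naj(\vals)$ along relevant statistics.

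First I would rewrite Lemma~\ref{lemma:singleprob} in complementary form: for a fixed relevant statistic $j$, the probability that it fails to have a $t$-consensus on $\vals$ is at most $-\log_c\!\left(1 - \tfrac{t}{\naj(\vals)}\right)$, which is a nonnegative number. Taking a union bound over all relevant statistics, the probability that some relevant statistic lacks a $t$-consensus is at most $\sum_{j\text{ relevant}} -\log_c\!\left(1-\tfrac{t}{\naj(\vals)}\right)$; by the remark following Definition~\ref{def:relevancy}, if every relevant statistic has a $t$-consensus then the estimated revenue curve (and valuation profile) itself has a $t$-consensus, so it suffices to bound this sum.

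Next I would invoke Lemma~\ref{lemma:alpha}: for every relevant $j$, $\naj(\vals) \geq m\alpha^{r-j}$ where $r$ is the largest relevant index, hence $\tfrac{t}{\naj(\vals)} \leq \tfrac{t}{m}\alpha^{-(r-j)}$. Since $x \mapsto -\log_c(1-x)$ is nonnegative and increasing on $[0,1)$, each summand is at most $-\log_c\!\left(1 - \tfrac{t}{m}\alpha^{-(r-j)}\right)$. Reindexing by $k = r-j \geq 0$, and noting that each value of $k$ is attained by at most one relevant statistic, the failure probability is at most $\sum_{k\geq 0} -\log_c\!\left(1 - \tfrac{t}{m}\alpha^{-k}\right) = -\log_c \prod_{k\geq 0}\!\left(1 - \tfrac{t}{m}\alpha^{-k}\right)$. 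Finally, the elementary inequality $\prod_k(1-a_k) \geq 1 - \sum_k a_k$ (valid for $a_k \in [0,1]$) together with $\sum_{k\geq 0}\alpha^{-k} = \tfrac{\alpha}{\alpha-1}$ gives $\prod_{k\geq 0}\!\left(1 - \tfrac{t}{m}\alpha^{-k}\right) \geq 1 - \tfrac{t\alpha}{m(\alpha-1)}$; since $-\log_c$ is decreasing, the failure probability is at most $-\log_c\!\left(1 - \tfrac{t\alpha}{m(\alpha-1)}\right)$, so the probability of $t$-consensus at all relevant values is at least $1 + \log_c\!\left(1 - \tfrac{t\alpha}{m(\alpha-1)}\right)$. (The statement is only informative when $\tfrac{t\alpha}{m(\alpha-1)} < 1$, which also ensures every logarithm above has a strictly positive argument, in particular $t < m$ so that the $j=r$ term is well defined.)

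The main obstacle is that the union bound ranges over a set of statistics not a priori known to be finite, so the whole argument hinges on Lemma~\ref{lemma:alpha} forcing the relevant statistics to be ``spread out'' — the care required is in correctly passing from ``$j$ relevant'' to the per-$k$ geometric bound and in observing that at most one relevant statistic contributes to each $k = r-j$, so the failure sum is genuinely dominated by the convergent geometric series $\tfrac{t}{m}\sum_{k\geq 0}\alpha^{-k}$ rather than being inflated by the multiplicity of indices.
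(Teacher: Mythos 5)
Your proof is correct and follows essentially the same route as the paper's: a union bound over relevant statistics, Lemma~\ref{lemma:alpha} to get the geometric decay in $\naj(\vals)$, conversion of the sum of logs into a log of a product, the inequality $\prod_k(1-a_k)\geq 1-\sum_k a_k$, and the geometric series. The only cosmetic difference is that you phrase things via the complementary failure probability and explicitly note that each $k=r-j$ is hit at most once, while the paper states the lower bound directly and extends the sum over $j\in R$ to a sum over all $i\geq 0$; these are the same observation.
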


\begin{proof}
  We will first bound the probability of consensus at one relevant value,
  then use the union bound to find the lower-bound of the probability of
  consensus at all relevant values.  For any relevant statistic $j$,
  let $\cej$ denote the event that $\naj$ has a $t$-consensus on $\vals$. 
  \begin{align*}
    \Pr[\cej]
    \geq 1+\log_c \left (1 - \tfrac{t}{\naj(\vals)}\right)
    \geq 1+\log_c\left(1-\tfrac{t}{m}\alpha^{-(r-j)}\right).
  \end{align*}
  The first inequality is from Lemma \ref{lemma:singleprob}, while the
  second inequality is from Lemma \ref{lemma:alpha}.  Let
  $R=\{j\,\:\,\text{$\naj$ is relevant}\}$.  The probability that all
  relevant statistics have $t$-consensus on $\vals$, using the union
  bound, is
  \begin{align*}
    \Pr[\text{$t$-consensus}]
    & \geq 1-\sum\nolimits_{j\in R}\Pr[\neg\cej]
      \geq 1+\sum\nolimits_{j\in R}\log_c\left(1-\tfrac{t}{m}\alpha^{-(r-j)}\right)\\
    & \geq 1+\sum\nolimits_{i\geq 0}\log_c\left(1-\tfrac{t}{m}\alpha^{-i}\right)
      =1+\log_c\left[
      \prod\nolimits_{i\geq 0}\left(1-\tfrac{t}{m}\alpha^{-i}\right)\right]\\
    & \geq 1+\log_c\left[1-\sum\nolimits_{i\geq 0}\tfrac{t}{m}\alpha^{-i}\right]
      = 1+\log_c\left[1-\tfrac{t}{m}\tfrac{\alpha}{\alpha-1}\right].
  \end{align*}
\end{proof}

% ratio guaranteed -------------------------------------------------------

The last thing that we need for our estimated revenue curves is for
them to be good estimates.  This follows directly from their
definition.

\begin{lemma}\label{lemma:ratio} For any $\shrand$ and $\vals$, 
  the consensus revenue curve $\erevs(\shrand,\vals)$ is a
  $c\alpha$-approximation of the revenue curve $\revs\super{m'}$ for
  $m' =
  \floor{mc}$ and truncated valuation profile $\vals\super{m'}=(\vali[m'],\ldots,
  \vali[m'],\vali[m'+1],\vali[m'+2],\ldots,\vali[n])$,
i.e., $\erevs(\shrand,\vals)\geq \tfrac{1}{c\alpha}
  \revs{\super{m'}}$.
\end{lemma}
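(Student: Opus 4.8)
The plan is to reduce the claimed functional inequality to a pointwise check at the integer coordinates. Since $\revs\super{m'}$ is, by definition, the smallest non-decreasing concave function that upper-bounds the point set $\{(i,\, i\,v_i\super{m'}) : i\in\{1,\dots,n\}\}$ (and the origin), the scaled curve $\tfrac{1}{c\alpha}\revs\super{m'}$ is the smallest such function upper-bounding the correspondingly scaled points; and $\erevs(\shrand,\vals)$ is itself a non-decreasing concave function that is non-negative at $0$ (it upper-bounds the origin by construction). So it suffices to prove that $\erevs(\shrand,\vals)$ evaluated at coordinate $i$ is at least $\tfrac{1}{c\alpha}\, i\,v_i\super{m'}$ for every $i\in\{1,\dots,n\}$, since then $\erevs(\shrand,\vals)$ is a non-decreasing concave function dominating the points that define $\tfrac{1}{c\alpha}\revs\super{m'}$, hence dominates that curve everywhere.

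First I would fix $i$; the bound is trivial when $v_i\super{m'}=0$, so set $w:=v_i\super{m'}>0$ and $\mu:=\max(i,m')$. The key structural observation is that $\vals$ has at least $\mu$ values that are $\ge w$: indeed $w$ equals $\vali[m']$ when $i\le m'$ (whence $\mu=m'$ and $v_\mu=\vali[m']=w$) and equals $\vali$ when $i>m'$ (whence $\mu=i$ and $v_\mu=\vali=w$). Taking $j:=\floor{\log_\alpha w}$, so that $\alpha^j\le w<\alpha^{j+1}$, this gives $\na{\vals}{j}\ge\mu\ge m'$. I would then check that statistic $j$ survives the minimum-support filter: rounding a quantity that is $\ge m'=\floor{mc}$ down to the grid $\{c^{\shrand+d}\}$ loses at most a factor $c$ and hence stays above $\floor{mc}/c>m-1$, so its ceiling $\ena{\vals}{j}$ is an integer $\ge m$. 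Therefore the point $\pointj(\shrand,\vals)=(\ena{\vals}{j},\,\alpha^j\ena{\vals}{j})$ genuinely participates in the construction of $\erevs(\shrand,\vals)$ and is dominated by it; the same rounding estimate also gives $\ena{\vals}{j}>\na{\vals}{j}/c\ge i/c$.

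The final step is a short concavity argument. Since $\erevs(\shrand,\vals)$ is concave, non-decreasing, non-negative at $0$, and lies on or above the point $(\ena{\vals}{j},\alpha^j\ena{\vals}{j})$, it lies above the chord from the origin to that point on $[0,\ena{\vals}{j}]$ and stays at least $\alpha^j\ena{\vals}{j}$ thereafter; hence its value at $i$ is at least $\alpha^j\min(i,\ena{\vals}{j})$. Combining this with $\alpha^j>w/\alpha$ (from $w<\alpha^{j+1}$), and in the case $i>\ena{\vals}{j}$ also with $\ena{\vals}{j}>i/c$, both cases give a value at $i$ of at least $\tfrac{1}{c\alpha}\, i\, w$, which is exactly the bound needed; intuitively the factor $\alpha$ is the loss from replacing $w$ by the grid value $\alpha^j$ and the factor $c$ is the loss from the consensus rounding.

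I expect the only real obstacle to be the bookkeeping around the support threshold — confirming that the statistic $j$ associated with agent $i$ (or with the cap value $\vali[m']$) is not discarded for insufficient support. This is precisely why the comparison is stated against the $m'=\floor{mc}$-truncated profile rather than the $m$-truncated one: one needs $\na{\vals}{j}\ge m'$ in order to conclude $\ena{\vals}{j}\ge m$ after the round-down-then-round-up. Everything past that point is the elementary concavity estimate plus the standard fact that rounding down to a geometric grid of ratio $c$ costs at most a factor $c$.
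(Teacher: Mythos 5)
Your proof is correct and follows the same overall strategy as the paper's: reduce to a pointwise inequality at integer indices, choose $j$ with $\alpha^j \le v_i^{(m')} < \alpha^{j+1}$, verify that the $j$th statistic has enough support to survive the $m$-filter (using $n_j(\vals)\ge \floor{mc}$ and the fact that rounding down to the geometric grid and then up to an integer costs at most a factor $c$), and then compare the surviving point $Q_j = (\tilde n_j, \alpha^j \tilde n_j)$ against $(i, \tfrac{1}{c\alpha}\, i\, v_i^{(m')})$ via $\alpha^j \ge v_i^{(m')}/\alpha$ and $\tilde n_j > n_j(\vals)/c \ge i/c$.

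There is one place where your argument is genuinely more careful than the paper's, and it is worth noting. The paper's proof asserts $\tilde n_j \le i$ and then concludes from monotonicity of $\erevs$ alone; but $\tilde n_j \le i$ is not automatic, since one only has $\tilde n_j \le n_j(\vals)$ together with $n_j(\vals) \ge i$, and in the presence of ties (e.g.\ $v_i = v_{i+1}$) one can have $n_j(\vals) > i$ with $\tilde n_j$ still exceeding $i$ after the round-down. Your chord-to-origin estimate $\erevs(i) \ge \alpha^j \min(i,\tilde n_j)$ covers both the case $\tilde n_j \le i$ (monotonicity) and the case $\tilde n_j > i$ (concavity together with nonnegativity at the origin), and yields the same $c\alpha$ factor either way. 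Your treatment of the indices $i < m'$ via $\mu := \max(i,m')$ is likewise a correct and more explicit spelling-out of the paper's one-line appeal to concavity for those indices.
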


\begin{proof}
It is sufficient to show that $\erevi \geq \tfrac{1}{c\alpha} i \vali$
for $i \geq m'$ as concavity of revenue curves would then imply the
lemma.  Consider then any index $i \geq m'$ and let $j$ be the index
of the statistic that satisfies $\alpha^j\leq\vali<\alpha^{j+1}$.
Since $\naj(\vals) \geq i$, by the definition of $\enaj$ and $m'$,
respectively, $\enaj \geq \ceil{i/c} \geq m$; therefore, statistic $j$
is not discarded in the first step of the construction of $\erevs$.
Furthermore, the point $\pointj = (\enaj,\alpha^j \enaj)$ is above and
to the left of $(i, \tfrac{1}{c\alpha} i \vali)$ because $\enaj \leq
i$, $\alpha^j \geq \vali/\alpha$, and $\enaj \geq i/c$.  Monotonicity
of $\erevs$, then, implies the desired $\erevi \geq \tfrac{1}{c\alpha}
i \vali$.
\end{proof}

% begin undefine local definitions ---------------------------------------
\makeatletter
\let\na\@undefined
\let\ena\@undefined
\let\point\@undefined
\let\crevs\@undefined
\let\crevi\@undefined
\makeatother
% end undefine local definitions ----------------------------------------------

\newcommand{\crevs}[1]{\erevs(\shrand,#1)}
\newcommand{\crevi}[2]{\erevi[#2](\shrand,#1)}

\section{Designed Mechanisms}

\label{sec:mechanisms}

We now proceed to define a mechanism that is a good approximation to
$\EFO \super 2 (\vals)$.  This mechanism will be a convex combination
of a primitive cross-checked consensus-estimate profit-extraction
mechanism and an extension of the Vickrey auction to downward-closed
permutation environments.

\begin{definition}\label{mech:ccepe'}
The {\em primitive cross-checked consensus-estimate profit-extraction}
mechanism, $\CCEPE'$ is the profit extraction mechanism $\PE{\evals}$
(Definition~\ref{mech:PE}) composed (Definition~\ref{d:cross-check})
with the valuation profile estimate (Definition~\ref{d:evals}).
$\CCEPE'$ is parameterized implicitly by $\alpha$, $c$, and $m$. 
\end{definition}

\begin{definition}\label{mech:vickrey}
The Pseudo-Vickrey auction, $\Vic$, serves the highest valued agent (and charge
her the second highest agent's value) if doing so is feasible with
respect to the set system, otherwise, serve no one.
\end{definition}

\begin{mechanism}\label{mech:ccepe}
The {\em cross-checked consensus-estimate profit-extraction}
mechanism, $\CCEPE$, is a convex combination of the Pseudo-Vickrey auction (with probability $p$) and $\CCEPE'$ (with probability $1-p$).  $\CCEPE'$ is parameterized implicitly by $\alpha$, $c$, $m$, and $p$. 

\end{mechanism}

The Pseudo-Vickrey mechanism is intended to obtain good revenue from
the highest-valued agents where as $\CCEPE'$ is intended to obtain
good revenue from the lower-valued agents.  The convex combination
obtains good revenue over all.  This analysis is given by the
following lemmas and theorem.

\begin{lemma}\label{l:ccepe'} For any $\vals$ and $m' = \floor{mc}$,
  the expected revenue of $\CCEPE'$ is a $\beta'$ approximation of approximates $\EFO\super{m'}$ with 
  \begin{align*}
    \beta' \leq 
    {c\alpha}
    \left[1+\log_c\left(1-\tfrac{2\alpha}{m(\alpha-1)}\right)\right]^{-1}.
  \end{align*}
\end{lemma}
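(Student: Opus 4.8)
The plan is to chain together the three main results about profit extraction and consensus estimation, applying them to the primitive mechanism $\CCEPE'$, and then to account for the probability that consensus fails. First I would fix $\shrand$ and set $t = 2$ (since cross-checking removes agents based on pairs, the relevant robustness parameter is $2$), and consider the event $\mathcal{G}$ that all relevant statistics have $2$-consensus on $\vals$. On this event, the estimated revenue curve and valuation profile are stable under removing any pair of agents, so the cross-checked consensus set $\sout$ is all of $\{1,\ldots,n\}$ and the mechanism $\CCEPE'$ simply runs $\PE{\evals(\shrand,\vals)}$ on the full profile. By Lemma~\ref{lemma:ratio}, $\erevs(\shrand,\vals) \geq \tfrac{1}{c\alpha}\revs\super{m'}$ with $m' = \floor{mc}$, and by the construction in Definition~\ref{d:evals} we have $\evals(\shrand,\vals) \leq \vals$; so Theorem~\ref{thm:PE} applies with $\beta = c\alpha$ and target revenue curve $\revs\super{m'}$, giving $\IC{\PE{\evals}}(\vals) \geq \tfrac{1}{c\alpha}\EFO\super{m'}(\vals)$ on the event $\mathcal{G}$.

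Next I would handle the failure event. When $\mathcal{G}$ does not occur, the revenue of $\CCEPE'$ is at least zero (the profit extractor and the cross-checking composition never run a deficit — step 1 of Definition~\ref{mech:PE} charges nothing if $\evals \not\leq \vals$, and rejected agents pay nothing). So taking expectations over $\shrand$,
\begin{align*}
\E[\CCEPE'(\vals)] \;\geq\; \Pr[\mathcal{G}] \cdot \tfrac{1}{c\alpha}\,\EFO\super{m'}(\vals).
\end{align*}
By Lemma~\ref{lemma:totalprob} with $t = 2$, $\Pr[\mathcal{G}] \geq 1 + \log_c\!\left[1 - \tfrac{2\alpha}{m(\alpha-1)}\right]$. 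Combining the two displays gives exactly the claimed bound $\beta' \leq c\alpha\left[1 + \log_c\left(1 - \tfrac{2\alpha}{m(\alpha-1)}\right)\right]^{-1}$.

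The step I expect to require the most care is the first one: verifying that on the event $\mathcal{G}$, the cross-checked consensus composition actually reduces to running $\PE{\evals(\shrand,\vals)}$ on the entire agent set, and that the $\IC$ revenue bound of Theorem~\ref{thm:PE} transfers through the composition. This uses that $t$-consensus for $t=2$ implies, for every relevant statistic $j$ and every pair $\{i,i'\}$, that $\ena{\vals}{j} = \ena{\valsmi[i,i']}{j}$, hence (by the remark following Definition~\ref{def:relevancy}) the points $\pointj$ coincide for all these profiles, hence $\erevs(\shrand,\valsmi[i,i']) = \erevs(\shrand,\vals)$ and likewise for $\evals$; so every pairwise consensus estimate $\cout_{i,i'}$ agrees and $\sout = \{1,\ldots,n\}$. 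One must also check that Lemma~\ref{lemma:ratio} and Definition~\ref{d:evals}'s guarantee $\evals \leq \vals$ indeed supply both hypotheses of Theorem~\ref{thm:PE} simultaneously. A secondary subtlety is confirming that $2$-consensus, rather than $1$-consensus, is the right notion: the cross-checking definition forms estimates from $\valsmi[i,j]$, omitting two agents at a time, which is precisely why $t=2$ appears in the application of Lemmas~\ref{lemma:totalprob} and~\ref{lemma:alpha}.
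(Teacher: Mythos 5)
Your proposal is correct and follows essentially the same route as the paper's proof: condition on the event that all relevant statistics have $2$-consensus (bounding its probability via Lemma~\ref{lemma:totalprob} with $t=2$), observe that on this event the cross-checked composition reduces to running $\PE{\evals(\shrand,\vals)}$ on all agents, and invoke Definition~\ref{d:evals}, Lemma~\ref{lemma:ratio}, and Theorem~\ref{thm:PE} to get the $c\alpha$-approximation to $\EFO\super{m'}(\vals)$. You merely spell out two steps that the paper leaves implicit — that revenue is nonnegative off the consensus event and that the expectation then multiplies the two factors.
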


\begin{proof}
  Let $(\evals,I)$ denote the outcome of the cross-checked consensus estimate
  of the valuation profile.  By Lemma~\ref{lemma:totalprob} (with
  $t=2$) all agents are cross-checked and $I = \{1,\ldots,n\}$ with
  probability at least $1+\log_c\left[1-\tfrac{2\alpha}{m(\alpha-1)}\right]$.
  In this case, $\evals \leq \vals$ (Definition~\ref{d:evals}) and
  $\erevs \geq \tfrac{1}{c\alpha} \revs \super {m'}$
  (Lemma~\ref{lemma:ratio}) therefore, the profit-extraction
  mechanism for $\evals$, $\PE{\evals}$, on $\vals$ obtains revenue at least
  $\frac{1}{c\alpha} \EFO \super {m'} (\vals)$ (Theorem~\ref{thm:PE}).
\end{proof}

\begin{lemma}\label{l:vic}
For any $\vals$, the Pseudo-Vickrey auction revenue is at least the
envy-free optimal payment of the highest-valued agent, i.e.,
$\IC{\Vic}(\vals) \geq \EFO \super 2 _ 1 (\vals)$.
\end{lemma}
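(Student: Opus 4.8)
The plan is to evaluate both sides of the inequality in terms of a single quantity: let $q$ be the probability, over the realization of the (randomized) feasibility constraint, that the singleton $\{1\}$ is feasible. I will show that $\IC{\Vic}(\vals) = \vali[2]\cdot q$ and that $\EFO\super 2_1(\vals)\leq\vali[2]\cdot q$, which together give the lemma.

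For the left side, observe that $\Vic$ serves only agent $1$, that it serves her exactly when $\{1\}$ is feasible, and that it then charges her the second-highest value $\vali[2]$, while every other agent pays nothing. Feeding this allocation rule through the payment identity \eqref{eq:icpi} (and using symmetry of the permutation environment to handle any tie at the top) yields $\IC{\Vic}(\vals) = \vali[2]\cdot q$.

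For the right side, write $\vals' = (\vali[2],\vali[2],\vali[3],\ldots,\vali[n])$ for the flattened profile that defines the benchmark, and let $\allocs'$ be the monotone allocation attaining $\EFO(\vals')$, so that $\EFO\super 2_1(\vals) = \EF{\allocs'}_1(\vals')$. Applying the per-agent envy-free payment formula \eqref{eq:efpi} gives $\EF{\allocs'}_1(\vals') = \sum_{j\geq 1}\vali[j]'\,(\alloci[j]' - \alloci[j+1]')$. Each coordinate of $\vals'$ is at most $\vali[2]$, and $\allocs'$ is monotone, so every difference $\alloci[j]' - \alloci[j+1]'$ is nonnegative; replacing each $\vali[j]'$ by $\vali[2]$ and telescoping (with $\alloci[n+1]' = 0$) gives $\EF{\allocs'}_1(\vals')\leq\vali[2]\cdot\alloci[1]'$. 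Finally $\alloci[1]'\leq q$, because by downward-closedness agent $1$ can belong to a feasible set only when $\{1\}$ is itself feasible, so her expected service probability is at most $q$. Chaining the inequalities finishes the proof.

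I expect the substance of the argument to be this telescoping bound, and the one point worth stating carefully is that it is exactly the flattening of the top two values in $\EFO\super 2$ that makes it work: if $\vali[1]'$ were allowed to exceed $\vali[2]$, the leading term of the sum could too, and the bound by $\vali[2]\cdot\alloci[1]'$ would fail. The randomization in the feasibility constraint needs no special handling beyond taking all allocation probabilities and payments in expectation over the realized set system, as in the model.
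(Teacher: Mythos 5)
Your proof is correct and follows the same two-step comparison the paper uses: bound the winning probability of the top agent in $\EFO\super 2$ by the probability $q$ that $\{1\}$ is feasible (which is exactly Pseudo-Vickrey's service probability), and bound the envy-free payment per unit of allocation by $\vali[2]$. The only difference is presentational: the paper phrases the second step informally as ``payment upon winning is at most $\vali[2]$'' after a WLOG reduction to $\vali[1]=\vali[2]$, whereas you make it rigorous by expanding $\EF{\allocs'}_1(\vals')$ via formula \eqref{eq:efpi} and telescoping -- which is a cleaner justification of the same inequality.
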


\begin{proof}
Assume that $\vali[1] = \vali[2]$, this is without loss for this lemma
because both Pseudo-Vickrey's revenue and $\EFO \super 2$'s revenue is
the same on $\vals$ and $\vals \super 2 =
(\vali[2],\vali[2],\vali[3],\ldots,\vali[n])$.  The payment for agent
1 upon winning in Pseudo-Vickrey is $\vali[2] = \vali[1]$; the payment
upon winning in $\EFO \super 2$ is at most $\vali[1] = \vali[2]$.  The
probability that agent 1 wins in Pseudo-Vickrey is the highest of any
feasible allocation (because agent 1 wins whenever serving agent 1 is
feasible); in particular it is as high as that of $\EFO \super 2$.  Therefore,
the revenue from agent 1 in Pseudo-Vickrey is at least that of $\EFO \super 2$.
\end{proof}

\begin{theorem} For any $\vals$,
  $\CCEPE$ is a $\beta$-approximation to $\EFO\super{2}(\vals)$ where $\beta$ satisfies
  \begin{align*} 
   \beta \leq \max\left\{\tfrac{\floor{mc}}{p},\tfrac{c\alpha}{1-p}
    \left[1+\log_c\left(1-\tfrac{2\alpha}{m(\alpha-1)}\right)\right]^{-1}\right\}.
  \end{align*}
\end{theorem}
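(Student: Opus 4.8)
The plan is to expand the convex combination defining $\CCEPE$, bound its expected revenue from below using the two preceding lemmas, and then charge the benchmark $\EFO\super{2}(\vals)$ against the two resulting terms via a decomposition lemma. Write $m'=\floor{mc}$. Since $\CCEPE$ runs Pseudo-Vickrey with probability $p$ and $\CCEPE'$ with probability $1-p$ (Mechanism~\ref{mech:ccepe}), linearity of expectation together with Lemma~\ref{l:ccepe'} gives
\[
\IC{\CCEPE}(\vals)=p\,\IC{\Vic}(\vals)+(1-p)\,\IC{\CCEPE'}(\vals)\ \ge\ p\,\IC{\Vic}(\vals)+\tfrac{1-p}{\beta'}\,\EFO\super{m'}(\vals),
\]
with $\beta'$ bounded as in that lemma.

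The crux is the benchmark decomposition $\EFO\super{2}(\vals)\le m'\cdot\IC{\Vic}(\vals)+\EFO\super{m'}(\vals)$, which I would prove in two steps. First, a pointwise revenue-curve inequality: $\revi[k]\super{2}\le m'\,\vali[2]+\revi[k]\super{m'}$ for every $k$. Using the standard description of the least non-decreasing concave majorant of $\{(i,i\vali[i])\}$, one has $\revi[k]=\max\{\sum_i\mu_i\,i\,\vali[i]:\ \mu\ge 0,\ \sum_i\mu_i i\le k,\ \sum_i\mu_i\le 1\}$; taking an optimal $\mu$ for $\revi[k]\super{2}$ and splitting it at index $m'$, the mass on indices $\le m'$ contributes at most $\vali[2]\sum_{i\le m'}\mu_i i\le m'\vali[2]$ (every coordinate of $\vals\super{2}$ is $\le\vali[2]$ and $\sum_i\mu_i\le 1$), while the mass on indices $>m'$ is an admissible witness for $\revi[k]\super{m'}$ since $\vals\super{2}$ and $\vals\super{m'}$ coincide above index $m'$. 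Second, let $\allocs$ be the monotone virtual-surplus maximizer for $\vals\super{2}$, so that $\EFO\super{2}(\vals)=\sum_i\revi[i]\super{2}(\alloci-\alloci[i+1])=\int_0^{\alloci[1]}\revi[N(\tau)]\super{2}\,d\tau$ by Lemma~\ref{lemma:EF}, where $N(\tau)$ is the number of agents served with probability at least $\tau$. The revenue-curve inequality then gives
\[
\EFO\super{2}(\vals)\ \le\ \alloci[1]\,m'\vali[2]+\int_0^1\revi[N(\tau)]\super{m'}\,d\tau\ \le\ \alloci[1]\,m'\vali[2]+\EFO\super{m'}(\vals),
\]
the last step because $\int_0^1\revi[N(\tau)]\super{m'}\,d\tau=\sum_i\virti[i]\super{m'}\alloci$ is the envy-free revenue of the feasible monotone allocation $\allocs$ for $\vals\super{m'}$, hence at most $\EFO\super{m'}(\vals)$. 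Finally $\alloci[1]\le\Pr[\{1\}\text{ feasible}]$ while $\IC{\Vic}(\vals)=\vali[2]\Pr[\{1\}\text{ feasible}]$, so $\alloci[1]m'\vali[2]\le m'\,\IC{\Vic}(\vals)$, which is the decomposition. I expect this step — the revenue-curve inequality and the correct handling of the boundary index $m'$ — to be the main obstacle.

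Given the decomposition, set $\beta=\max\{m'/p,\ \beta'/(1-p)\}$, so that $m'/\beta\le p$ and $1/\beta\le(1-p)/\beta'$. Then
\begin{align*}
\IC{\CCEPE}(\vals)
&\ge p\,\IC{\Vic}(\vals)+\tfrac{1-p}{\beta'}\EFO\super{m'}(\vals)
\ \ge\ \tfrac{m'}{\beta}\IC{\Vic}(\vals)+\tfrac{1}{\beta}\EFO\super{m'}(\vals)\\
&=\tfrac{1}{\beta}\bigl(m'\,\IC{\Vic}(\vals)+\EFO\super{m'}(\vals)\bigr)
\ \ge\ \tfrac{1}{\beta}\EFO\super{2}(\vals),
\end{align*}
so $\CCEPE$ is a $\beta$-approximation to $\EFO\super{2}(\vals)$. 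Substituting $m'=\floor{mc}$ and the bound on $\beta'$ from Lemma~\ref{l:ccepe'} yields the claimed expression for $\beta$.
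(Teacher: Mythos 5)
Your proof is correct and it reaches the paper's key inequality, the benchmark decomposition $\EFO\super{2}(\vals)\leq m'\,\IC{\Vic}(\vals)+\EFO\super{m'}(\vals)$ with $m'=\floor{mc}$, and then combines it with Lemma~\ref{l:ccepe'} exactly as the paper does. Where you diverge is in how that decomposition is established. The paper argues per agent: it splits $\EFO\super{2}$ into the contribution from the top $m'$ agents $H$ and the rest $L$, bounds the first by Lemma~\ref{l:vic} together with the fact that per-agent envy-free payments are non-increasing in the agent index (so each of the $m'$ top agents pays at most $\EFO\super{2}_1\leq\IC{\Vic}$), and bounds the second by observing that $\EFO\super{m'}$ could reuse $\EFO\super{2}$'s allocation and collect the identical envy-free payments from $L$, since the values of $L$-agents are unaffected by either truncation. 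You instead prove the pointwise revenue-curve domination $\revi[k]\super{2}\leq m'\vali[2]+\revi[k]\super{m'}$ via the LP characterization of the concave majorant, pass to $\EFO$ through the integral $\int_0^1\revi[N(\tau)]\,d\tau$, and finally convert $\alloci[1]\vali[2]$ into $\IC{\Vic}(\vals)$ using $\alloci[1]\leq\Pr[\{1\}\ \text{feasible}]$. Both are valid. The paper's route is shorter and avoids the revenue-curve machinery at this step; yours is more computational but yields a marginally sharper intermediate bound (the $m'\vali[2]$ term is scaled by $\alloci[1]\leq 1$ before being rounded up to $m'\IC{\Vic}$) and makes the dependence on revenue-curve structure explicit, at the cost of invoking an LP characterization of $\revs$ that the paper never states. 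The step you flagged as the main obstacle is indeed the one the paper dispatches most tersely, and your more detailed treatment is a reasonable way to fill it in.
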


\begin{proof}
  We will separate our revenue into two parts; the first part is
  obtained from the top $m' = \floor{mc}$ agents, denoted $H =
  \{1,\ldots,m'\}$; and the second part is obtained from the remaining
  $n-m'$, denoted $L = \{m'+1,\ldots,n\}$.

  The contribution to the envy-free optimal revenue by the top agents
  satisfies $\EFO \super 2 _ H (\vals) \leq m' \IC{\Vic}(\vals)$.
  This bound follows from Lemma~\ref{l:vic} and the observation that
  envy-free payments are monotonically non-increasing in agent values.

  The contribution to the envy-free optimal revenue by the bottom
  agents satisfies $\EFO \super 2 _ L (\vals) \leq \EFO \super {m'}
  (\vals)$.  This bound follows as $\EFO \super {m'}$ could try to
  simulate the outcome of $\EFO \super 2$ and would then receive the
  same contribution to revenue from agents $L$ as $\EFO \super 2$;
  of course, its revenue from all agents must only be higher.

  In conclusion, $\EFO \super 2 (\vals) \leq m'\IC{\Vic}(\vals) + \EFO
  \super {m'} (\vals)$.

  The revenue of our mechanism, $\CCEPE$, the sum of a $\beta_1 =
  m'/p$ approximation to $m'\IC{\Vic}(\vals)$ and a $\beta_2 =
  \beta'/(1-p)$ approximation to $\EFO \super {m'} (\vals)$, with
  $\beta'$ as defined in Lemma~\ref{l:ccepe'}.  Therefore, it is a
  $\beta = \max(\beta_1,\beta_2)$ approximation to $\EFO \super
  2(\vals)$.
\end{proof}

% the number -------------------------------------------------------------
We can optimize the parameters of $\CCEPE$ to obtain the following corollary.

\begin{corollary}
  For any $\vals$, $\CCEPE$ with
  $p=0.627$, $c=1.666$, $\alpha=2.734$ and $m=12$ is a
  $30.4$-approximation to $\EFO\super{2}(\vals)$.
\end{corollary}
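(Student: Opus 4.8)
The plan is to substitute the given parameter values into the bound from the preceding theorem and verify numerically that the maximum of the two expressions is at most $30.4$. Concretely, the theorem states that $\CCEPE$ is a $\beta$-approximation to $\EFO\super{2}(\vals)$ with
\begin{align*}
\beta \leq \max\left\{\tfrac{\floor{mc}}{p},\ \tfrac{c\alpha}{1-p}\left[1+\log_c\left(1-\tfrac{2\alpha}{m(\alpha-1)}\right)\right]^{-1}\right\},
\end{align*}
so the corollary follows by plugging in $p=0.627$, $c=1.666$, $\alpha=2.734$, $m=12$ and bounding each term.

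First I would handle the first term: with $m=12$ and $c=1.666$ we have $mc = 19.992$, so $\floor{mc} = 19$, and hence $\floor{mc}/p = 19/0.627 \approx 30.30$, which is below $30.4$. (It is worth double-checking that $mc < 20$ so that the floor is indeed $19$ and not $20$; since $12 \times 1.666 = 19.992 < 20$ this holds, and this is exactly why the slightly awkward value $c = 1.666$ rather than, say, $5/3$ is chosen — it keeps $\floor{mc}=19$ while leaving a little slack.)

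Next I would handle the second term. Compute $m(\alpha-1) = 12 \times 1.734 = 20.808$, so $\tfrac{2\alpha}{m(\alpha-1)} = \tfrac{5.468}{20.808} \approx 0.26278$, giving $1 - \tfrac{2\alpha}{m(\alpha-1)} \approx 0.73722$. Then $\log_c(0.73722) = \ln(0.73722)/\ln(1.666) \approx (-0.30482)/(0.51057) \approx -0.59702$, so the bracketed factor is $1 - 0.59702 = 0.40298$, and its reciprocal is about $2.4815$. Meanwhile $\tfrac{c\alpha}{1-p} = \tfrac{1.666 \times 2.734}{0.373} = \tfrac{4.5549}{0.373} \approx 12.212$. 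Multiplying, the second term is approximately $12.212 \times 2.4815 \approx 30.30$, again below $30.4$. Taking the maximum of the two terms and invoking the theorem completes the proof.

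The only real subtlety — and the step I would be most careful about — is not any deep argument but the arithmetic near the boundary: both terms come out extremely close to $30.3$, so one must verify that the logarithm argument $1 - \tfrac{2\alpha}{m(\alpha-1)}$ is safely positive (it is, being roughly $0.737$, so $\beta'$ from Lemma~\ref{l:ccepe'} is finite and positive) and that rounding in the numerical estimates does not push either quantity above $30.4$. Carrying an extra digit or two through the computation of $\log_c$ suffices to make the bound rigorous; there is no conceptual obstacle, only the need for a careful numerical check that the chosen parameters were optimized correctly.
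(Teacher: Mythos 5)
Your proposal is correct and matches the paper's (implicit) approach: the corollary has no separate proof in the paper and is simply the preceding theorem with $p=0.627$, $c=1.666$, $\alpha=2.734$, $m=12$ substituted in, yielding $\floor{mc}/p = 19/0.627 \approx 30.30$ and $\tfrac{c\alpha}{1-p}\bigl[1+\log_c(1-\tfrac{2\alpha}{m(\alpha-1)})\bigr]^{-1}\approx 30.32$, both below $30.4$. Your numerics carry small rounding errors in $\ln(0.73722)$ and $\ln(1.666)$ that roughly cancel, but with the extra digit of precision you flag as necessary the bound holds.
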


\bibliographystyle{plain}
\bibliography{consensus}

\end{document}